  \DeclareMathOperator*{\argmin}{arg\,min}
  \DeclareMathOperator*{\argmax}{arg\,max}
  \newcommand{\norm}[1]{\left\Vert#1\right\Vert_1}
  \newcommand{\abs}[1]{\left\vert#1\right\vert}
  \newcommand{\ceil}[
  1]{
  \left\lceil #1 \right\rceil}
  \newcommand{\floor}[1]{\left\lfloor #1 \right\rfloor }
  \newcommand{\zbox}{\mathrm{ZBox}}
  \newcommand{\rbox}{
  \mathrm{RBox}}
  \newcommand{\girth}{\mathrm{girth}}%
  \newcommand{
  \avg}{\mathrm{avg}}
  \newcommand{\eqdf }{\triangleq}%
  \newcommand{\R }{\mathds{R}}%notation for Reals
  \newcommand{\N }{\mathds {N}}
  \newcommand{\calN }{\mathcal{N}} %neighborhood
  \newcommand{\calC }{\mathcal{C}} %constraint vertex sets
  \newcommand{\calT }{\mathcal{T}} %trees
  \newcommand{\calF }{\mathcal{F}}
  \newcommand{\calV }{\mathcal{V}} %variable vertex sets
  \newcommand{\calP }{\mathcal{P}}
  \newcommand{\calX }{\mathbf{X}}
  \newcommand{\calO }{\mathcal{O}}
  \newcommand{\calE }{\mathcal{E}}
  \newcommand{\msp}{\textsc{min-sum-packing}}
  \newcommand{\msc}{\textsc{min-sum-covering}}
  \newcommand{\cip}{\textsc{CIP}}
  \newcommand{\clp}{\textsc{CLP}}
  \newcommand{\pip}{\textsc{PIP}}
  \newcommand{\plp}{\textsc{PLP}}
  \newcommand{\OPTLP}{\text{OPT}_\text{LP}}
  \newcommand{\OPTDP}{\text{OPT}_\text{DP}}
  \newcommand{\ignore}[1]{}
\newcommand{\TScapE}{\calT_S \cap \calE}
\newcommand{\TScapO}{\calT_S \cap \calO}
\newcommand{\vecc}[1]{\mathbf{#1}}
\newcommand{\vx}{\vecc{x}}
\newcommand{\vw}{\vecc{w}}
\newcommand{\vb}{\vecc{b}}
\newcommand{\vd}{\vecc{d}}
\newcommand{\vz}{\vecc{z}}
\newcommand{\vtx}{\vecc{\tilde{x}}}
\newcommand{\vy}{\vecc{\tilde{y}}}
\newcommand{\vA}{\vecc{A}}
  \newenvironment{proofof}[1]{\noindent \emph{ Proof of {#1}.}~}{\qed\endproof}
  \newtheorem{definition}{Definition}
  \newtheorem{lemma}[definition]{Lemma}
  \newtheorem{claim}[definition]{Claim}
  \newtheorem{proposition}[definition]{Proposition}
  \newtheorem{theorem}[definition]{Theorem}
  \newtheorem{corollary}[definition]{Corollary}
  \newtheorem{observation}[definition]{Observation}
\begin{document}

  \title{Analysis of the Min-Sum Algorithm
    for Packing and Covering Problems
via Linear Programming}

  \author{
        Guy Even \thanks{School of Electrical Engineering, Tel-Aviv University, Tel-Aviv 69978, Israel.  \mbox{{E-mail}:\ {\tt guy@eng.tau.ac.il}.}}
        \and
        Nissim Halabi \thanks{School of Electrical Engineering, Tel-Aviv University, Tel-Aviv 69978, Israel. \mbox{{E-mail}:\ {\tt nissimh@eng.tau.ac.il}.}}}

  \date{}

   \maketitle

  \begin{abstract}
 Message-passing algorithms based on belief-propagation (BP) are successfully used in many applications including decoding error
 correcting codes and solving constraint satisfaction and inference problems.
 BP-based algorithms operate over graph representations, called factor graphs, that are used to model the
 input. Although in many cases BP-based algorithms exhibit impressive empirical results, not much has been proved when the factor graphs have cycles.

This work deals with packing and covering integer
programs in which the constraint matrix is zero-one, the
constraint vector is integral, and the variables are subject to box
constraints. We study the performance of the min-sum algorithm when
applied to the corresponding factor graph models of packing and covering LPs.

We compare the solutions computed by the min-sum algorithm for packing and covering problems to the optimal solutions of the corresponding linear programming (LP) relaxations.
In particular, we prove that if the LP has an optimal fractional solution, then for
each fractional component, the min-sum algorithm either computes multiple solutions or the solution oscillates below and above the fraction.
This implies that the
min-sum algorithm computes the optimal integral solution only if the
LP has a unique optimal solution that is integral.

The converse is not true in general. For a special case of packing and covering problems, we prove that if the LP has a unique optimal solution that is integral and on the boundary of the box constraints, then the min-sum algorithm computes the optimal solution in pseudo-polynomial time.

Our results unify and extend recent results for the maximum weight
matching problem by [Sanghavi \emph{et al.},'2011] and [Bayati \emph{et al.}, 2011] and for the maximum weight independent set problem [Sanghavi \emph{et al.}'2009].
\end{abstract}
%\clearpage
%\sloppy
%\setcounter{page}{1}

  %%%%%%%%%%%%%%%%%%%%%%%%%%%%%%%%%%%%
  \section{Introduction} \label{sec:intro}
  %%%%%%%%%%%%%%%%%%%%%%%%%%%%%%%%%%%%

  We consider optimization problems over the integers called
  \emph{packing} and \emph{covering} problems.  Many optimization
  problems can be formulated as packing problems including maximum
  weight matchings and maximum weight independent sets.  Optimization
  problems such as minimum weight set-cover and minimum weight dominating
  set are special cases of covering problems. The input for both types
  of problems consists of an $m\times n$ zero-one constraint matrix $\vA$, an
  integral constraint vector $\vb$, an upper bound vector $\calX\in
  \N^n$, and a weight vector $\vw\in \R^n$. An integral vector $\vx\in
  \N^n$ is an \emph{integral packing} if $\mathbf{0} \leq \vx \leq
  \calX$ and $\vA\cdot \vx \leq \vb$.  In a packing problem the goal is
  to find an integral packing that minimizes $\vw^T \cdot \vx$.  An
  integral vector $\vx\in \N^n$ is an \emph{integral covering} if
  $\mathbf{0} \leq \vx \leq \calX$ and $\vA\cdot \vx \geq \vb$.  In a
  covering problem the goal is to find an integral packing that
  maximizes $\vw^T \cdot \vx$.

  Packing and covering problems generalize problems that are solvable
  in polynomial time (e.g., maximum matching) and problems that are
  NP-hard and even NP-hard to approximate (e.g., maximum independent
  set). The hardness of special cases of these problems imply that
  general algorithms for packing/covering problems are heuristic in
  nature. Two heuristics that are used in practice to solve such problems are
  linear programming (LP) and belief-propagation (BP).

  Linear programming deals with optimizing a linear function over
  polyhedrons (subsets of the Euclidean space
  $\R^n$)~\cite{bertsimas1997introduction}. Perhaps the most naive way to
  utilize linear programming in this setting is to solve the LP
  relaxation of the integer problem (i.e., relax the restriction that
  $\vx\in \N^n$ to the restriction $\vx\in\R^n$). If the result
  happens to be integral, then we are lucky and we have found an
  optimal integral packing or covering. A great deal of literature
  deals with characterizing problems for which this method works well
  (e.g., works on total unimodularity~\cite{schrijver1998theory}). In fact, LP
  decoding of error correcting codes works in the same fashion and has
  been proven to work well in average~\cite{FWK05,ADS09}.

  Belief-propagation is an algorithmic paradigm that deals with
  inference over graphical models~\cite{Pea88}. The graphical model
  that corresponds to packing/covering problems is a bipartite graph
  that represents the zero-one matrix $\mathbf{A}$. We focus on a
  common variant of belief-propagation that is called the min-sum
  algorithm (or the max-product algorithm).  In the variant we
  consider, the initial messages are all zeros and messages are not
  attenuated.

  Our main result is a proof that the min-sum algorithm is not better
  than the heuristic based on linear programming. This proof holds for
  every instance of packing and covering problems described above with
  respect to the min-sum algorithm with zero initialization and no
  attenuation.

\paragraph{Previous Work.}
Message-passing algorithms based on \emph{belief-propagation} (BP)
have been invented multiple times
(see~\cite{Gal63,Vit67,Pea88}). Numerous papers report empirical
results that demonstrate the usefulness of these algorithms for
decoding error correcting codes, inference with noise, constraint
satisfaction problems, and many other applications~\cite{Y11}. It took
a while until it was noticed that algorithms for decoding of Turbo
codes~\cite{BGT93} and LDPC codes~\cite{Gal63} are special variants of
BP~\cite{MMC98,Wib96}.

In this paper we focus on a common variant of BP called the
\emph{min-sum algorithm}, and consider the case where messages are
initialized to zero.  The BP algorithm is a message-passing algorithm
in which messages are sent along edges of a graph called the
\emph{factor graph}.  The factor graph of packing/covering problems is
a bipartite graph that represents that constraint matrix $\mathbf{A}$.
In essence, value computed by the min-sum algorithm for $x_i$ equals
the outcome of a dynamic programming algorithm over a path-prefix tree
rooted at the vertex corresponding to the $i$th column of
$\mathbf{A}$. Since dynamic programming computes an optimal solution
over trees, the min-sum algorithm is optimal when the factor graph is
a tree~\cite{Pea88,Wib96}. A major open problem is to analyze the
performance of BP (or even the min-sum algorithm) when the factor
graph is not a tree. Execution of algorithms based on BP over graphs
that contain cycles is often referred to as \emph{loopy BP}.

Recently, a few papers have studied the usefulness of the min-sum
algorithm for solving optimization problems compared to linear
programming.  Such a comparison for the maximum weight matching
problem appears in~\cite{BSS08,BBCZ11,SMW11} with respect to
constraints of the form $\sum_{u \text{ neighbor of } v}  x_{(u,v)} \leq 1$ for every vertex
$v$.  Loosely speaking, the main result that they show for maximum
weighted matching is that the min-sum algorithm is successful if and only if
the LP heuristic is successful.  The sufficient
condition states that if the LP relaxation has a
unique optimal solution and that solution is integral, then the
min-sum algorithm computes this solution in pseudo-polynomial time.
The necessary condition states that if the LP
relaxation has a fractional optimal solution, then the min-sum
algorithm fails.  In~\cite{SSW09}, the min-sum algorithm for the
maximum weighted independent set problem was studied with
respect to constraints $x_u+x_v\leq 1$ for every edge $(u,v)$. They
prove an analogous necessary condition and present a counter-example
that proves that the sufficient condition (i.e., unique optimal
solution for the LP that is integral) does not imply the success of
the min-sum algorithm.  The performance of the min-sum algorithm has
been also studied for computing shortest $s$-$t$ paths~\cite{RT08} and
min-cost flows~\cite{GSW12,BCMR13}.

\paragraph{Our Results.}
The results in this paper extend and generalize previous necessary
conditions for the success of the min-sum algorithm.  This necessary
condition implies that, compared to the LP heuristic,
the min-sum algorithm is not a better heuristic for solving packing
and covering problems. Our contributions can be summarized as follows:
(1)~We consider a unified framework of packing and covering problems.
Previous works deal with a zero-one constraint matrix $\mathbf{A}$
that has two nonzero entries in each column (for maximum weight
matching~\cite{BSS08,BBCZ11,SMW11}) or two nonzero entries in each row
(for maximum weight independent set~\cite{SSW09}). Our results hold with
respect to any zero-one constraint matrix $\mathbf{A}$.
(2)~We allow box constraints, namely $x_i \in\{0,1,\ldots,
  X_i\}$.  Previous results consider only zero-one variables.
(3)~Our oscillation results hold also when the LP
  relaxation has multiple solutions. To obtain such a result, we
  consider the set of optimal values computed by the min-sum algorithm
  at each variable (rather than declare failure if there are multiple
  optimal values). We compare these sets with the optimal solutions of
  the LP relaxation and show a weak oscillation
  between even and odd iterations.
  (4)~The analogous result for covering LPs is obtained by a simple
  reduction (see Claim~\ref{claim:CLPtoPLP}) that applies
  complementation; this reduction generalizes reductions from maximum
  matchings to minimum edge covers~\cite{SMW11}.
  (5)~We present a unified proof method based on graph covers. This
  method also enables us to prove convergence of the min-sum algorithm
  under certain restrictions (see Theorem~\ref{thm:msp-convergence} in Appendix~\ref{app:convergence_proof}).

\paragraph{Techniques.}
The main challenge in comparing between the min-sum algorithm and
linear programming is in finding a common structure that captures both
algorithms. It turns out that graph covers are a common structure.
Graph covers have been used previously to analyze iterative message
passing algorithms~\cite{VK05,EH11}. In the context of optimization
problems, $2$-covers have been used in~\cite{BBCZ11} to reduce
matchings in general graphs to matchings in bipartite
graphs~\cite{BSS08}.  Bayati \emph{et al.}~\cite{BBCZ11} write that
their ``proof gives a better understanding of the often-noted but
poorly understood connection between BP and LP through graph covers.''
We further clarify this connection by using higher order covers that
capture fractional optimal LP solutions, as suggested by Ruozzi and
Tatikonda~\cite{RT12}.  Graph covers not only capture LP solutions but
also solutions computed by the min-sum algorithm. In fact, the min-sum
algorithm performs the same computation over any graph cover because
it operates over a path-prefix tree of the factor graph. Hence we make
the mental experiment in which the min-sum algorithm is executed over
a graph cover in which all the basic feasible solutions are integral.
We avoid the problems associated with loopy-BP by considering a graph
cover, the girth of which is much larger than the number of iterations
of the min-sum algorithm.  Thus the execution of the min-sum algorithm
is equivalent to a dynamic programming over subtrees induced by balls
in the graph cover. This mental game justifies a dynamic programming
interpretation of the outcome of the min-sum algorithm. The dynamic
programming algorithm makes a ``local'' decision based on balls, the
radius of which is twice the number of iterations. The LP solution, on
the other hand, is a global solution.

The proof  proceeds by creating ``hybrid'' solutions that
either refute the (global) optimality of the LP solution or the
(local) optimality of the dynamic programming solution over the ball
in the graph cover.

  %%%%%%%%%%%%%%%%%%%%%%%%%%%%%%%%%%%%%%%%%%%%%%%%%%%%%%%%%%%%%%%%%%%%%%%%
  \section{Preliminaries} \label{sec:prelim}
  %%%%%%%%%%%%%%%%%%%%%%%%%%%%%%%%%%%%%%%%%%%%%%%%%%%%%%%%%%%%%%%%%%%%%%%%

  \subsection{Graph Terminology and Algebraic Notation}
  \paragraph{Algebraic Notation.}
  We denote vectors in bold, e.g., $\vx,\vz$.  We denote the $i$th
  coordinate of $\vx$ by $x_i$, e.g., $\vx = (x_1,\ldots,x_n)$.  For a vector
  $\vx\in\R^n$, let $\norm{\vx}\triangleq\sum_i\abs{x_i}$ denote
  the $\ell_1$ norm of $\vx$. The cardinality of a set $S$ is denoted
  by $\abs{S}$.  We denote by $[n]$ the set $\{0,1,2,...,n\}$ for $n
  \in \N$.  For a set $S \subseteq [n]$, we denote the projection of the vector $\vx$ onto indices in
  $S$ by $\vx_S \in
  \R^{\abs{S}}$ .

  A vector is \emph{rational} if all its components are rational.
  Similarly, a vector is \emph{integral} if all its components are
  integers. A vector is \emph{fractional} if it is not integral, i.e.,
  at least one of its components is not an integer.

  Let ${\calX}\in\N^n$ denote a non-negative integral vector. Denote the
  Cartesian product $[{X}_1]\times \cdots \times [{X}_n]$ by $\zbox({\calX})$.
  Similarly, denote the Cartesian product $[0,{X}_1]\times \cdots
  [0,{X}_n]$ by $\rbox({\calX})$. Note that vectors in $\zbox(\calX)$ are integral.

  \paragraph{Graph Terminology.}
  Let $G=(V,E)$ denote an undirected simple graph.
  Let $\calN_G(v)$ denote the set of neighbors of vertex $v \in V$ (not including $v$ itself).
  Let $\deg_G(v)$ denote the edge degree of vertex $v$
  in a graph $G$, i.e., $\deg_G(v)\triangleq\lvert\calN_G(v)\rvert$.
  For a set $S \subseteq V$ let
  $\mathcal{N}_G(S)\triangleq\bigcup_{v \in S}\mathcal{N}_G(v)$.
  A \emph{path} in $G$ is a sequence of vertices such that
  there exists an edge between every two consecutive vertices in the
  sequence.
A \emph{backtrack} in a path is a subpath that is a
    loop consisting of two edges traversed in opposite directions, i.e., a subsequence $(u,v,u)$. 
All  paths considered in this paper do not include backtracks.
The \emph{length} of a path is the number of edges in the path.
We denote the length of a path $p$ by $\lvert p\rvert$.
  Let $d_G(r,v)$ denote the distance
  (i.e., length of a shortest path) between vertex $r$ and $v$ in $G$,
  and let $\girth(G)$ denote the length of the shortest cycle in $G$.
  Let $B_G(v,t)$ denote the set of vertices in $G$ with distance at most $t$ from $v$, i.e., $B_G(v,t)\triangleq\{u\in V\mid d_G(v,u)\leq t\}$.

  The \emph{subgraph of $G$ induced by $S \subseteq V$} consists of
  $S$ and all edges in $E$, both endpoints of which are contained in
  $S$. Let $G_S$ denote the subgraph of $G$ induced by $S$.  A subset
  $S\subseteq V$ is an \emph{independent set} if there are no edges in
  the induced subgraph $G_S$.
  A graph $G=(V,E)$ is \emph{bipartite} if $V$ is the union of two
  disjoint nonempty independent sets.

  \subsection{Covering and Packing Linear Programs}
  We consider two types of linear programs called covering and packing
  problems. In both cases the matrices are zero-one matrices and the
  constraint vectors are positive.

  \noindent \medskip In the sequel we refer to the constraints
  $\vx\in\zbox(\calX)$ and $\vx\in\rbox(\calX)$ as \emph{box constraints}.

  \begin{definition}\label{def:C-P_IP_LP}
  Let $\mathbf{A}\in \{0,1\}^{m\times n}$ denote a zero-one matrix with $m$ rows and $n$ columns.
  Let $\vb\in \R_+^m$ denote a constraint vector, let $\vw\in \R^n$ denote a weight vector, and let $\calX\in\N^n$ denote a domain boundary vector.
      \begin{enumerate}
      \item[{$[\pip]$}] The integer program $\argmax\big\{\vw^T\cdot \vx~\big\vert~\mathbf{A}\cdot \vx \leq \vb,~\vx\in\zbox(\calX)\big\}$ is called a \emph{packing IP}, and denoted by \pip.
      \item[{$[\cip]$}] The integer program $\argmin\big\{\vw^T\cdot \vx~\big\vert~\mathbf{A}\cdot \vx \geq \vb,~\vx\in\zbox(\calX)\big\}$ is called a \emph{covering IP}, and denoted by \cip.
      \item[{$[\plp]$}] The linear program $\argmax\big\{\vw^T\cdot \vx~\big\vert~\mathbf{A}\cdot \vx \leq \vb,~\vx\in\rbox(\calX)\big\}$ is called a \emph{packing LP}, and denoted by \plp.
      \item[{$[\clp]$}] The linear program $\argmin\big\{\vw^T\cdot \vx~\big\vert~\mathbf{A}\cdot \vx \geq \vb,~\vx\in\rbox(\calX)\big\}$ is called a \emph{covering LP}, and denoted by \clp.
      \end{enumerate}
    \end{definition}

    \subsection{Factor Graph Representation of Packing and Covering LPs}
    The belief-propagation algorithm and its variant called the min-sum
    algorithm deal with graphical models known as \emph{factor graphs}
    (see, e.g., \cite{KFL01}).  In this section we review the definition of factor
    graphs that are used to model covering and packing problems.

  \begin{definition}[factor graph model of packing problems]\label{def:FG_model}
    A quadruple $\langle G,\Psi,\Phi, \calX\rangle$ is the \emph{factor
      graph model} of \pip\ if:
  \begin{itemize}

  \item $G=(\calV\cup \calC,E)$ is a bipartite graph that represents the
    zero-one matrix $\mathbf{A}$. The set of \emph{variable vertices}
    $\calV=\{v_1,\ldots,v_n\}$ corresponds to the columns of $\mathbf{A}$, and
    the set of \emph{constraint vertices} $\calC=\{C_1,\ldots,C_m\}$
    corresponds to the rows of $\mathbf{A}$. The edge set is defined by
    $E\triangleq\{(v_i,C_j)\mid \mathbf{A}_{ji}=1\}$.

  \item The vector $\calX\in \N^n$ defines the alphabets that are
    associated with the variable vertices.  The alphabet associated
    with $v_i$ equals $\{0,\ldots,X_i\}$.

  \item For each constraint vertex $C_j$, we define a packing factor function
$\psi_{C_j}:\zbox(\calX)\rightarrow\{0,-\infty\}$,
defined by
      \begin{align}\label{eqn:packing_factor}
      \psi_{C_j}(\mathbf{y}) &\eqdf
          \begin{cases}
            0 & \text{if } ~\sum_{v_i\in\calN_G(C_j)} y_i \leq  b_j\\
            -\infty & \text{otherwise}
          \end{cases}
      \end{align}
We denote the set of factor functions $\{\psi_{C_j}\}_j$ by $\Psi$.
\item For each variable vertex $v_i$, we define a variable function
  $\phi_{v_i}:[0,X_i]\rightarrow \R$ defined by $\phi_{v_i}(\beta)\triangleq w_i\cdot
  \beta$.
We denote the set of variable functions $\{\phi_{v_i}\}_i$ by $\Phi$.
  \end{itemize}
  \end{definition}
  We note that (1)~One could define a factor graph for \plp; the only
  difference is that the alphabet associated with variable vertex
  $v_i$ is the real interval $[0,X_i]$, and the range of each factor
  function is $\rbox(\calX)$. (2)~The factor functions are local in the
  sense that each constraint vertex $C_j$ can evaluate the value of
  $\psi_{C_j}$ based on the values of its neighbors.

  A vector $\vx\in\R^n$ is viewed as an \emph{assignment} to variable vertices in $\calV$
  where $x_i$ is assigned to vertex $v_i$.
To avoid composite indices, we use $x_v$ to denote the value assigned to $v$ by the assignment $\vx$.
An integral assignment $\vx$ is
  \emph{valid} if it satisfies all the constraints, namely, $x_i\in [X_i]$ for every $i$ and $\mathbf{A}\cdot \vx \leq \vb$.

  The factor graph model allows for the following equivalent formulation of the packing integer program:
  \begin{equation}
  \argmax\big\{\sum_{v\in\calV}\phi_v(x_v)+\sum_{C\in\calC}\psi_C(\vx)~\big\vert~
  \vx \in \zbox(\mathbf{X})\big\}.
  \end{equation}
If there exists at least one valid assignment, then this formulation is equivalent to the formulation:
  \begin{equation}
  \argmax\big\{\sum_{v\in\calV}\phi_v(x_v)~\big\vert~
  \text{$\vx$ is a valid integral assignment}\big\}.
  \end{equation}

  We may define a factor graph model for covering problems in the same manner. The only difference is in the definition of the covering factor functions, namely,
   \begin{align}\label{eqn:covering_factor}
     \psi_C(y) &\eqdf
          \begin{cases}
          0 & \text{if $\sum_{v\in\calN_G(C)} y_v \geq b_C$}\\
          \infty & \text{otherwise}
          \end{cases}
      \end{align}
  Using this factor model, we can reformulate the covering integer program \cip\ by
  \begin{equation}
  \argmin\big\{\sum_{v\in\calV}\phi_v(x_v)+\sum_{C\in\calC}\psi_C(\vx)~\big\vert~
  \vx\in \zbox(\mathbf{X})\big\}.
  \end{equation}

One could define a factor graph model for general LP's as well.
Suppose the goal is to maximize the objective function. Then, for each constraint, the range of the factor function is $\{-\infty,0\}$. If the constraint $C$ is satisfied, then the value of $\psi_C$ is $0$; otherwise it is $-\infty$.
  %%%%%%%%%%%%%%%%%%%%%%%%%%%%%%%%%%%%%%%%%%%%%%%%%%%%%%%%%%%
  \section{Min-Sum Algorithms for Packing and Covering Integer
    Programs}\label{sec:min-sum}
  %%%%%%%%%%%%%%%%%%%%%%%%%%%%%%%%%%%%%%%%%%%%%%%%%%%%%%%%%%%
  In this section we present the min-sum algorithm for solving packing
  and covering integer programs with zero-one constraint matrices.
  Strictly speaking, the algorithm for \pip\ is a max-sum algorithm,
  however we refer to these algorithms in general as min-sum
  algorithms. All the results in this section apply to any other
  equivalent algorithmic representation (e.g., max-product-type
  formulations). We first define the min-sum algorithms for \pip s and
  \cip s, and then state our main results.

  \subsection{The Min-Sum Algorithm}
  The min-sum algorithm for the packing integer program (\pip) is
  listed as Algorithm~\ref{alg:MP-pack}.  The input to algorithm \msp\
  consists of a factor graph model $\langle G,\Psi,\Phi,\calX\rangle$ of
  a \pip\ instance and a number of iterations $t\in\N$.  Each iteration
  consists of two parts. In the first part, each variable vertex
  performs a local computation and sends messages to all its neighboring
  constraint vertices.  In the second part, each constraint vertex
  performs a local computation and sends messages to all its neighboring
  variable vertices.
  Hence, in each iteration, two messages are sent along each edge.

  Let $\mu_{v\rightarrow C}^{(t')}(\beta)$ denote the message sent from a
  variable vertex $v\in\calV$ to an adjacent constraint vertex
  $C\in\calC$ in iteration $t'$ under the assumption that vertex $v$
  is assigned the value $\beta\in\{0,\ldots,X_v\}$.
  Similarly, let $\mu_{C\rightarrow v}^{(t')}(\beta)$ denote the message sent from
  $C\in\calC$ to $v\in\calV$ in iteration $t'$ assuming that vertex $v$
  is assigned the value $\beta\in\{0,\ldots,X_v\}$. Denote by
  $\mu_v(\beta)$ the final value computed by variable vertex $v\in\calV$
  for assignment of $\beta\in\{0,\ldots,X_v\}$.

  The initial messages (considered as the zeroth iteration) have the
  value zero and are sent along all the edges from the constraint
  vertices to the variable vertices. We refer to these initial
  messages as the \emph{zero initialization} of the min-sum algorithm.
  % In practice, these messages are not sent, and are part of the
  %initialization by the variables vertices.

  The algorithm proceeds with $t$ iterations. In Line~\ref{line:2a}
  the message to be sent from $v$ to $C$ is computed by adding the
  previous incoming messages to $v$ (not including the message from
  $C$) and adding to it $\phi_v(\beta)$.  In Line~\ref{line:2b} the
  message to be sent from $C$ back to $v$ is computed. The constraint
  vertex $C$ considers all the possible assignments $\vz$ to its
  neighbors in which $z_v=\beta$.  In fact, only assignments that
  satisfy the constraint of $C$ and the box constraints of the
  neighbors of $C$ are considered.  The message from $C$ to $v$ equals
  the maximum sum of the previous incoming messages (not including the
  message from $v$) among these assignments.

  Finally, in Line~\ref{line:3} each vertex $v$ decides locally on its
  outcome $\hat{x}_v$.  The maximum value $\mu_v^{\max}$ is computed,
  and the set $\delta_{v,t}$ of values that achieve the maximum is
  computed as well.  The decision $\hat{x}_v$ of vertex $v$ equals the
  minimum or maximum value in $\delta_{v,t}$ depending on the parity of
  $t$. Here, we deviate from previous descriptions of the min-sum
  algorithm that declare failure if $\delta_{v,t}$ contains more than
  one element.

    \begin{algorithm}
      \caption[\msp$(\langle G,\Psi,\Phi,\calX\rangle,t)$]{\msp$(\langle
        G,\Psi,\Phi,\calX\rangle,t)$ - A min-sum algorithm for a \pip\  $\argmax\{\vw^T\cdot \vx~\vert~\mathbf{A}\cdot \vx \leq
        \vb,~\vx\in\zbox(\calX)\}$.  Given the factor graph model $\langle
        G,\Psi,\Phi,\calX\rangle$ of the \pip\ and the number of iterations
        $t\in\N$, outputs a vector $\hat{\vx}\in \zbox (\calX)$.}
    \label{alg:MP-pack}
      \begin{enumerate}
      \item {\bf Initialize:} {\bf For each} $(v,C)\in E$ and $\beta\in\{0,\ldots,X_v\}$ {\bf do}
        \[\mu_{C\rightarrow v}^{(0)}(\beta)\gets0\]
      \item {\bf Iterate:} {\bf For} $t'=1$ to $t$ {\bf do}
        \begin{enumerate}
        \item \label{line:2a} {\bf For each} $(v,C)\in E$ and $\beta\in \{0,\ldots,X_v\}$ {\bf do} \{variable-to-constraint message\}
          \[\mu_{v\rightarrow C}^{(t')}(\beta) \gets \phi_v(\beta) + \sum_{C'\in\calN(v)\setminus\{C\}} \mu^{(t'-1)}_{C'\rightarrow v}(\beta)\]
        \item \label{line:2b} {\bf For each} $(v,C)\in E$ and $\beta\in \{0,\ldots,X_v\}$ {\bf do} \{constraint-to-variable message\}
            \[\mu_{C\rightarrow v}^{(t')}(\beta) \gets
\max
\left\{
\sum_{u\in\calN(C)\setminus\{v\}} \mu^{(t')}_{u\rightarrow C}(z_u)
\bigg\vert
\vz\in\zbox(\calX)\text{ s.t. } z_v=\beta,
\psi_C(\vz)=0
\right\}
\]

        \end{enumerate}
    \item \label{line:3} {\bf Decide:} {\bf For each} $v\in\calV$ {\bf do}
    \begin{enumerate}
      \item\label{line:3a} {\bf For each} $\beta\in\{0,\ldots,X_v\}$ {\bf do}
         \[\mu_v(\beta) \gets \sum_{C\in\calN(v)}\mu_{C\rightarrow v}^{(t)}(\beta)\]
\item $\mu^{\max}_v \triangleq \max \{\mu_v(\beta) \mid \beta\in [X_v]\}$
\item
 \label{line:3b}
 $\delta_{v,t} \eqdf \{\beta \mid \mu_v(\beta) = \mu^{\max}_v\}$.
         \item
           \begin{align*}
\hat{x}_v \gets&
          \begin{cases}
             \max\{\beta \mid \beta\in \delta_{v,t}\} & \text{if $t$ is even}\\
             \min\{\beta \mid \beta\in \delta_{v,t}\} & \text{if $t$ is odd}
          \end{cases}
           \end{align*}
      \end{enumerate}
      {\bf Return} $\hat{\vx}$
      \end{enumerate}
  \end{algorithm}

  Algorithm \msc\ listed as Algorithm~\ref{alg:MP-cover} is based on the
  following reduction of the covering LP to a packing LP as follows.
It is easy to write a direct min-sum formulation of algorithm \msc.
  \begin{claim}\label{claim:CLPtoPLP}
  Let $\vd\triangleq \mathbf{A}\cdot\calX-\vb\in\R^m$, then
  \[\argmin\big\{\vw^T\cdot \vz~\big\vert~\mathbf{A}\cdot \vz \geq \vb,~\vz\in\rbox(\calX)\big\} = \calX- \argmax\big\{\vw^T\cdot \vx~\big\vert~\mathbf{A}\cdot \vx \leq \vd,~\vx\in\rbox(\calX)\big\}.\]
  \end{claim}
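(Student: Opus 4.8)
\noindent\emph{Proof plan for Claim~\ref{claim:CLPtoPLP}.}
The plan is to prove the identity by the affine involution $\vx\mapsto\calX-\vx$ on $\R^n$, which exchanges the two feasible regions and turns minimization into maximization up to an additive constant. First I would check that this map is a bijection between the feasible region of $\clp$ and the feasible region of the packing LP $\argmax\{\vw^T\vx\mid \mathbf{A}\vx\le\vd,\ \vx\in\rbox(\calX)\}$. Indeed, for $\vz\in\R^n$ we have $\vz\in\rbox(\calX)$, i.e. $\mathbf{0}\le\vz\le\calX$, if and only if $\mathbf{0}\le\calX-\vz\le\calX$, i.e. $\calX-\vz\in\rbox(\calX)$; and $\mathbf{A}\cdot\vz\ge\vb$ if and only if $\mathbf{A}\cdot(\calX-\vz)=\mathbf{A}\cdot\calX-\mathbf{A}\cdot\vz\le\mathbf{A}\cdot\calX-\vb=\vd$. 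Hence $\vz$ is feasible for $\clp$ iff $\calX-\vz$ is feasible for the packing LP.

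Second, I would track the objective under this substitution. For any feasible $\vz$, $\vw^T\vz=\vw^T\calX-\vw^T(\calX-\vz)$, and $\vw^T\calX$ is a constant independent of $\vz$. Therefore $\vz$ minimizes $\vw^T\vz$ over the covering feasible region if and only if $\calX-\vz$ maximizes $\vw^T\vx$ over the packing feasible region. Applying the bijection $\vz\leftrightarrow\calX-\vz$ to the full set of optimizers on each side yields $\argmin\{\cdots\}=\calX-\argmax\{\cdots\}$, where $\calX-S\triangleq\{\calX-\vz\mid \vz\in S\}$.

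Finally, I would dispose of the degenerate cases so that the set equality is literally correct. If $\clp$ is infeasible then, by the feasibility equivalence above, so is the packing LP, and both sides are the empty set; the identity holds trivially. If both are feasible, then since $\rbox(\calX)$ is a bounded polytope, both optima are attained, so both $\argmin$ and $\argmax$ are nonempty and the correspondence of the previous paragraph applies verbatim. There is essentially no deep obstacle here: the only points requiring care are that $\argmin$ and $\argmax$ denote sets of optimal points rather than optimal values—so the claim is a set identity—and that the additive constant $\vw^T\calX$ shifts the optimal value but not the set of optimizers.
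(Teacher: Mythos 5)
Your proposal is correct and is essentially the same argument as the paper's proof: both use the affine involution $\varphi(\vz)=\calX-\vz$ as a bijection between the two feasible regions together with the identity $\vw^T\cdot\vz=\vw^T\cdot\calX-\vw^T\cdot\varphi(\vz)$. Your additional care about attainment and the infeasible case is fine but not needed beyond what the paper already states.
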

  \begin{proof}
    Consider the mapping $\varphi(\vz) \triangleq \calX-\vz$.  The mapping
    $\varphi$ is a one-to-one and onto mapping from the set
    $\{\vz\in\rbox(\calX) \mid \mathbf{A}\cdot \vz \geq \vb\}$ to the set $\{\vx
    \in\rbox(\calX) \mid \mathbf{A}\cdot \vx \leq \vd\}$.  Moreover, the mapping
    $\varphi$ satisfies $\vw^T\cdot \vz = \vw^T \cdot \calX - \vw^T\cdot \varphi(\vz)$, and the claim follows.
\end{proof}

  \begin{algorithm}
    \caption[\msc$(\langle G,\Psi,\Phi,\calX\rangle,t)$]{\msc$(\langle
      G,\Psi,\Phi,\calX\rangle,t)$ - a min-sum algorithm for a \cip\
      $\argmin\{\vw^T\cdot \vz~\vert~\mathbf{A}\cdot \vz \geq \vb,~\vz\in\zbox(\calX)\}$.
      Given the factor graph model $\langle G,\Psi,\Phi,\calX\rangle$ of the
      \cip\ and the number of iterations $t\in\N$, outputs a vector
      $\hat{\vz}\in \zbox(\calX)$.}\label{alg:MP-cover}
      \begin{enumerate}
      \item  Let $\langle G,\Psi',\Phi,\calX\rangle$ denote the factor graph model for the \plp\
$$\argmax\big\{\vw^T\cdot \vx~\big\vert~\mathbf{A}\cdot \vx \leq \vd,~\vx\in\zbox(\calX)\big\},$$
where $\vd \eqdf \mathbf{A}\cdot \calX - \vb$.
    \item Let $\hat{\vx}$ denote the outcome of \msp$(\langle G,\Psi',\Phi,\calX\rangle,t)$
    \item {\bf Return} $\hat{\vz} \triangleq \calX - \hat{\vx}$.
      \end{enumerate}
    \end{algorithm}

  %%%%%%%%%%%%%%%%%%%%%%%%%%%%%%%%%%%%%%%%%%%%%%%%%%%%%%%%%%%
  \subsection{Main Results}\label{subsec:divergence}
  %%%%%%%%%%%%%%%%%%%%%%%%%%%%%%%%%%%%%%%%%%%%%%%%%%%%%%%%%%%
\paragraph{Notation.}
Let $\OPTLP$ denote the set of optimal solutions of the packing LP
\[
\OPTLP \eqdf \argmax\big\{\vw^T\cdot \vx~\big\vert~\mathbf{A}\cdot \vx \leq
  \vb,\vx\in\rbox(\calX)\big\}.\]
Let $\langle G,\Psi,\Phi,\calX\rangle$ denote the factor graph model of this packing LP.
Fix a variable vertex $r \in \calV$ in the factor graph $G$ of the packing LP.
Let $x^{\min}_r \triangleq \min \{ x^*_r \mid \vx^*\in \OPTLP\}$ and
let $x^{\max}_r \triangleq \max \{ x^*_r \mid \vx^*\in \OPTLP\}$.
Let $\delta^{\min}_{r,t}$ and
$\delta^{\max}_{r,t}$ denote the minimum and maximum values in
$\delta_{r,t}$, respectively.

\medskip
\noindent
The proof of the following theorem appears in Section~\ref{subsec:divergence_proof}.
\begin{theorem}[weak oscillation]\label{thm:dp}
Consider an execution of \msp$(\langle G,\Psi,\Phi,\calX\rangle,t)$.
For every variable vertex $r\in \calV$ the following holds:
    \begin{enumerate}
    \item If $t$ is even, then $\max \{\beta \mid \beta\in \delta_{r,t}\} \geq x^{max}_r$.
    \item If $t$ is odd, then $\min \{\beta \mid \beta\in\delta_{r,t}\} \leq x^{min}_r$.
    \end{enumerate}
\end{theorem}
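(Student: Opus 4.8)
The plan is to move the entire picture to a single finite graph cover $\hat{G}$ of the factor graph $G$ --- one on which the optimal LP solution of interest becomes integral and on which balls around the chosen root are trees --- so that the min-sum computation turns into an honest dynamic program over a tree. First I fix an optimal solution $\vx^*$ of the packing LP with $x^*_r=x^{\max}_r$ (and, for the odd case, $x^*_r=x^{\min}_r$); since the problem data are rational, the face of $\OPTLP$ on which $x_r$ is maximized is a rational polyhedron, so $\vx^*$ may be chosen rational, say with common denominator $M$. Using the higher-order graph covers of Ruozzi and Tatikonda, I lift $\vx^*$ to an \emph{integral} assignment $\hat{\vx}^*$ on an $M$-cover of $G$ that is feasible for the lifted \plp\ and whose average over the copies of each variable equals the corresponding entry of $\vx^*$. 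I then pass to a further finite cover of large girth --- a finite cover of a cover is again a finite cover of $G$, the integral assignment re-lifts and remains feasible, and finite covers of arbitrarily large girth exist --- and call the result $\hat{G}$, arranging that $\girth(\hat{G})$ exceeds a suitable linear function of $t$ (about $4t$), keeping the name $\hat{\vx}^*$ for the re-lifted assignment. Averaging each lifted constraint over the fibers (in both directions) shows that the optimal LP value on $\hat{G}$ equals the cover degree times the optimal LP value on $G$; hence $\hat{\vx}^*$ is LP-optimal on $\hat{G}$, and being integral and within the box it is also optimal for the packing \emph{integer} program on $\hat{G}$.

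The second ingredient is that the min-sum algorithm is oblivious to covers: it performs identical computations on $G$ and on $\hat{G}$, because $\mu^{(t)}$ is read off the path-prefix (computation) tree, which depends only on the common universal cover. Unrolling the recursion of Algorithm~\ref{alg:MP-pack} shows that, after $t$ iterations, $\mu_v(\beta)$ equals the optimum of a dynamic program over the depth-$2t$ computation tree rooted at $v$ --- maximizing the sum of the $\phi_u$-values over the variable vertices $u$ of the tree with $v$ pinned to $\beta$, subject to satisfying the constraint vertices at depths $1,3,\dots,2t-1$. When $\girth(\hat{G})$ is large, this tree coincides with the subtree of $\hat{G}$ induced by the ball $\hat{B}:=B_{\hat{G}}(\hat r,2t)$ around any copy $\hat r$ of $r$; thus $\mu_{\hat r}(\beta)$, which equals the value $\mu_r(\beta)$ that the algorithm computes on $G$, is a dynamic-programming optimum over $\hat{B}$ with root value pinned to $\beta$, and the ``boundary'' variable vertices of $\hat{B}$ (those at distance $2t$) are restricted only by the tree, not by their remaining neighbours in $\hat{G}$.

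The third step is a win--win exchange argument; I describe the even case. Suppose, for contradiction, that $t$ is even and $\beta^*:=\max\delta_{r,t}<x^{\max}_r$. Since $\hat{\vx}^*$ projects to $\vx^*$, the average of $\hat x^*_{\hat v}$ over the fiber of $r$ equals $x^*_r=x^{\max}_r>\beta^*$, so some copy $\hat r$ of $r$ satisfies $\beta':=\hat x^*_{\hat r}\ge\beta^*+1$. As $\hat{\vx}^*|_{\hat{B}}$ is a feasible assignment for the ball dynamic program with root value $\beta'$, we get $\mu_{\hat r}(\beta')\ge V_B$, where $V_B$ denotes the $\hat{B}$-objective of $\hat{\vx}^*|_{\hat{B}}$. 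Next I argue that $V_B$ is in fact the overall ball optimum $\mu_{\hat r}^{\max}=\max_\beta\mu_{\hat r}(\beta)$: otherwise some assignment on $\hat{B}$ beats $V_B$, and splicing it into $\hat{\vx}^*$ --- keeping $\hat{\vx}^*$ outside $\hat{B}$ --- produces a feasible assignment on $\hat{G}$ of strictly larger weight, contradicting the optimality of $\hat{\vx}^*$ from the first step. Hence $\mu_{\hat r}(\beta')\ge V_B=\mu_{\hat r}^{\max}$, so $\beta'\in\delta_{r,t}$, contradicting $\beta'>\beta^*=\max\delta_{r,t}$. The odd case is the mirror image: take $\vx^*$ with $x^*_r=x^{\min}_r$, choose a copy $\hat r$ with $\hat x^*_{\hat r}\le\min\delta_{r,t}-1$, and run the same splicing argument with maxima replaced by minima; alternatively, derive it from the even case via the complementation reduction of Claim~\ref{claim:CLPtoPLP}.

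The crux --- and the step I expect to be the real obstacle --- is the splicing in the third step. The ball dynamic program is optimal for a \emph{relaxed} problem that ignores the constraints incident to the boundary variables of $\hat{B}$ (these constraint vertices lie at distance $2t+1$ in $\hat{G}$), so a ball-optimal assignment may give those boundary variables larger values than $\hat{\vx}^*$ does, and then the naive splice need not satisfy the distance-$(2t+1)$ constraints. Establishing that $\hat{\vx}^*|_{\hat{B}}$ already attains $\mu_{\hat r}^{\max}$ therefore requires a more careful construction: perturb $\hat{\vx}^*$ only along a structured ``alternating'' correction that emanates from $\hat r$ and terminates at the boundary of $\hat{B}$ while leaving the distance-$(2t+1)$ constraints intact, exploiting both the global optimality of $\hat{\vx}^*$ and the tree structure of $\hat{B}$ (this generalizes the alternating-path exchange arguments for matchings). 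The other, lighter, technical points are the reduction to a rational $\vx^*$ (handled above by rationality of the relevant face) and verifying that a high-girth finite cover carrying the integral lift exists (by composing the cover supporting the lift with a high-girth cover).
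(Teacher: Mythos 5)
Your setup (rational optimal $\vx^*$, an $M$-lift on which it becomes integral, a further lift of girth $>4t$, and the identification of the min-sum output with a dynamic program over the ball $B_{\tilde G}(\tilde r,2t)$) matches the paper's. But the third step has a genuine gap, and you have correctly located it yourself: you reduce everything to the claim that $\hat{\vx}^*|_{\hat B}$ attains the ball optimum $\mu^{\max}_{\hat r}$, and that claim is \emph{false} in general. A boundary variable at distance $2t$ sees only its in-ball constraints in the DP, so a ball-optimal assignment can push it to $X_u$ and collect weight $w_u>0$ even though the out-of-ball constraints force $\hat x^*_{\hat u}=0$; then $V_B<\mu^{\max}_{\hat r}$ and no splicing argument can rescue the identity. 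Saying that one should instead ``perturb $\hat{\vx}^*$ along a structured alternating correction'' names the right object but does not supply it, and in any case the target statement would still be the false one.

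The paper's resolution is different in a way that matters: it never compares $\hat{\vx}^*|_{\hat B}$ to the ball optimum. Instead it fixes a DP-optimal $\vz$ with $z_{(r)}=\delta^{\max}_{r,t}<x^{\max}_r$ and builds \emph{two} hybrids along one combinatorial object. Let $\calE,\calO$ be the variable vertices at distance $\equiv 0,2\pmod 4$ from $\tilde r$; take the connected component $\calT$ containing $\tilde r$ of the subgraph induced by $\{\tilde u\in\calE: z_{\tilde u}<\tilde x_{\tilde u}\}\cup\{\tilde u\in\calO: z_{\tilde u}>\tilde x_{\tilde u}\}$ together with all constraint vertices, and extract a skinny subtree $\calT_S$ (each constraint keeps one child, each variable keeps all children). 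The first hybrid $\vy$ perturbs $\vtx$ toward $\vz$ on $\calT_S$ ($-1$ on $\calT_S\cap\calE$, $+1$ on $\calT_S\cap\calO$); its validity --- the delicate case being a constraint with a single skinny-tree neighbour in an odd layer, whose other children lie in the ball but \emph{outside} the alternating tree and hence satisfy $z_{\tilde u}\geq\tilde x_{\tilde u}$, so validity of $\vz$ transfers --- together with LP-optimality of $\avg(\vtx)$ yields $\tilde w(\calT_S\cap\calE)\geq\tilde w(\calT_S\cap\calO)$. The second hybrid $\theta$ perturbs $\vz$ in the opposite direction ($+1$ on $\calT_S\cap\calE$, $-1$ on $\calT_S\cap\calO$); by the weight inequality it is again DP-optimal, yet $\theta_{(r)}=z_{(r)}+1>\delta^{\max}_{r,t}$, contradicting the maximality of $z_{(r)}$ over $\OPTDP(r,t)$. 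This ``win--win'' between the global LP and the local DP is the missing idea; the boundary difficulty you flagged disappears because the comparison is $\theta$ versus $\vz$ (both supported on the ball), not a spliced assignment versus $\hat{\vx}^*$ on all of $\hat G$.
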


  \begin{corollary}\label{coro:msp-divergence}
 If there exists an optimal solution $\vx\in \OPTLP$ such that $x_r$ is not an integer, then
$\hat{x}_r \geq \ceil{x_r}$ if the number of iterations $t$ is even, and
$\hat{x}_r \leq \floor{x_r}$ if $t$ is odd.
  \end{corollary}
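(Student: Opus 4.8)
The plan is to read Corollary~\ref{coro:msp-divergence} directly off Theorem~\ref{thm:dp} together with the decision rule in Line~\ref{line:3} of Algorithm~\ref{alg:MP-pack}; all the real work sits in the theorem, whose proof I outline afterwards. Fix the variable vertex $r$ and an optimal LP solution $\vx\in\OPTLP$ with $x_r$ not an integer. Since $\vx$ is among the vectors over which $x^{\min}_r$ and $x^{\max}_r$ are extremized, $x^{\min}_r\le x_r\le x^{\max}_r$. If $t$ is even, the algorithm outputs $\hat{x}_r=\max\{\beta\mid\beta\in\delta_{r,t}\}=\delta^{\max}_{r,t}$, and Theorem~\ref{thm:dp}(1) gives $\delta^{\max}_{r,t}\ge x^{\max}_r\ge x_r$; as $\hat{x}_r\in[X_r]$ is an integer, $\hat{x}_r\ge x_r$ forces $\hat{x}_r\ge\ceil{x_r}$. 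If $t$ is odd, symmetrically $\hat{x}_r=\delta^{\min}_{r,t}\le x^{\min}_r\le x_r$ by Theorem~\ref{thm:dp}(2), and integrality of $\hat{x}_r$ gives $\hat{x}_r\le\floor{x_r}$. The non-integrality of $x_r$ is exactly what makes the two parities yield genuinely different bounds $\ceil{x_r}>x_r>\floor{x_r}$, i.e.\ the promised oscillation. There is no real obstacle in this step; the content is all in Theorem~\ref{thm:dp}.

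To prove that theorem I would follow the graph-cover strategy sketched in the introduction. Lift the factor graph $G$ of the \plp\ to a finite cover $\tilde{G}$ of girth larger than $4t$, chosen via a sufficiently high-order cover (following Ruozzi and Tatikonda) so that the lifted packing LP admits an \emph{integral} optimum $\vtx^{*}$ whose average over the copies of each vertex equals the fractional optimum $\vx^{*}$; then, whenever $\delta^{\max}_{r,t}<x^{\max}_r$, some copy $\tilde r$ of $r$ satisfies $\tilde x^{*}_{\tilde r}>\delta^{\max}_{r,t}$. A min-sum message at $\tilde r$ depends only on the depth-$2t$ path-prefix tree rooted at $\tilde r$, which is a genuine tree in $\tilde{G}$ and is isomorphic to the one rooted at $r$; hence $\delta_{\tilde r,t}=\delta_{r,t}$, and $\mu_{\tilde r}(\beta)$ equals the optimum, over assignments to the ball $B_{\tilde{G}}(\tilde r,2t)$ with $\tilde x_{\tilde r}=\beta$, of the local objective $\sum_v\phi_v+\sum_C\psi_C$ restricted to that ball. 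Now assume for contradiction $\delta^{\max}_{r,t}<x^{\max}_r$ (the even case). A DP-optimal ball assignment with $\tilde x_{\tilde r}=\delta^{\max}_{r,t}$ strictly beats any ball assignment forced to set $\tilde x_{\tilde r}=\tilde x^{*}_{\tilde r}$, in particular the restriction of $\vtx^{*}$ to the ball. One then forms two hybrids on $\tilde{G}$ --- roughly, ``ball-optimal inside, $\vtx^{*}$ outside'' and ``$\vtx^{*}$ inside, ball-optimal outside'' --- arranged so that their boundary corrections cancel when the two objective values are added; the sum then strictly exceeds the sum of the global LP optimum and the local DP optimum, so one hybrid refutes one of these optima, a contradiction. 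The odd case is identical with $\delta^{\min}$ and $x^{\min}$. I expect the main obstacle --- and the origin of the parity-dependence --- to be precisely this splicing: the ball must be cut through a layer of variable vertices, and the unit adjustments propagated inward across the constraints must alternate in sign so that the net adjustment at the root $r$ is an \emph{increase} (available when $t$ is even) rather than a decrease; one must then check that, after cancellation, the only change to the objective is inside the ball and is strictly positive.
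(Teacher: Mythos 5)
Your first paragraph is exactly the intended argument: the paper states Corollary~\ref{coro:msp-divergence} as an immediate consequence of Theorem~\ref{thm:dp}, and your derivation (using $x^{\min}_r\le x_r\le x^{\max}_r$, the parity-dependent choice of $\hat{x}_r$ in Line~\ref{line:3}, and integrality of $\hat{x}_r$ to pass from $\hat{x}_r\ge x_r$ to $\hat{x}_r\ge\ceil{x_r}$) is correct and complete. Your appended sketch of Theorem~\ref{thm:dp} follows the paper's graph-cover strategy in outline, though the paper's actual hybrids are unit perturbations along a skinny alternating subtree rather than a wholesale inside/outside splice; since the theorem is available as a proven result, this does not affect the corollary.
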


The following corollary implies that if algorithm \msp\ outputs the same value for a vertex in two consecutive iterations, then this
value is the LP optimal value.
\begin{corollary}\label{coro:even odd}
  Let $t$ denote an even number and $s$ denote an odd number. If
  $\delta_{r,t}\cap \delta_{r,s} \neq \emptyset$, then
  $\delta_{r,t}\cap \delta_{r,s}$ contains a single element $\beta$ such that
  $x^{\min}_r= x^{\max}_r=\beta$.
\end{corollary}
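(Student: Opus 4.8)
\emph{Proof proposal.} The plan is to derive Corollary~\ref{coro:even odd} from Theorem~\ref{thm:dp} by a squeezing argument, after observing that the proof of Theorem~\ref{thm:dp} gives slightly more than its statement. Fix an even $t$, an odd $s$, and a value $\beta\in\delta_{r,t}\cap\delta_{r,s}$. I claim everything follows from the following strengthening of the two items of Theorem~\ref{thm:dp}, which I will call ($\star$): \emph{for even $t$ every element of $\delta_{r,t}$ is $\ge x^{\max}_r$, and for odd $s$ every element of $\delta_{r,s}$ is $\le x^{\min}_r$.} Granting ($\star$): since $\beta\in\delta_{r,t}$ we get $\beta\ge x^{\max}_r$, and since $\beta\in\delta_{r,s}$ we get $\beta\le x^{\min}_r$, hence $x^{\max}_r\le\beta\le x^{\min}_r$; combined with $x^{\min}_r\le x^{\max}_r$ (which holds by definition, being the min and max of $x_r$ over the same set $\OPTLP$) all three inequalities are equalities, so $\beta=x^{\min}_r=x^{\max}_r$. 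Any other $\beta'\in\delta_{r,t}\cap\delta_{r,s}$ would satisfy the same chain $x^{\max}_r\le\beta'\le x^{\min}_r$, so $\beta'=\beta$; thus $\delta_{r,t}\cap\delta_{r,s}=\{\beta\}$ is a singleton, which completes the corollary.

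So the whole proof reduces to establishing ($\star$). The weaker facts $\delta^{\max}_{r,t}\ge x^{\max}_r$ (for even $t$) and $\delta^{\min}_{r,s}\le x^{\min}_r$ (for odd $s$) are exactly Theorem~\ref{thm:dp}; what ($\star$) adds is that this one-sidedness holds for \emph{every} maximizer in $\delta_{r,t}$ (resp.\ $\delta_{r,s}$), not only the extreme one. I would obtain this by re-running the argument behind Theorem~\ref{thm:dp} in Section~\ref{subsec:divergence_proof}: that argument moves to a graph cover whose girth exceeds the number of iterations, so that the min-sum computation at $r$ equals a dynamic program over a ball around a lift of $r$, and lifts an optimal LP solution to the cover. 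For even $t$ and an arbitrary $\beta\in\delta_{r,t}$ there is a DP-optimal assignment on the ball whose value at the lifted root is $\beta$; if $\beta<x^{\max}_r$, take a lift of an LP-optimal solution attaining $x^{\max}_r$ at the root, and combine the two along the symmetric-difference walks emanating from the root to build a hybrid assignment that either refutes the global optimality of the LP solution or the local (ball-)optimality of the DP assignment --- precisely the dichotomy used to prove Theorem~\ref{thm:dp}. Hence no $\beta<x^{\max}_r$ lies in $\delta_{r,t}$; the statement for odd $s$ and $x^{\min}_r$ is entirely symmetric, pushing the DP optimum in the decreasing direction towards an LP optimum that attains $x^{\min}_r$ at the root.

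The main obstacle is exactly this upgrade from ``the extreme maximizer is on the correct side of the LP interval $[x^{\min}_r,x^{\max}_r]$'' to ``every maximizer is.'' One is tempted to try to deduce it formally from Theorem~\ref{thm:dp} as stated, say by additionally noting that the beliefs $\mu_r^{(\cdot)}$ are concave (being, through the factor graph, compositions of sums and box-constrained maxima of affine functions), so that each $\delta_{r,t}$ is a block of consecutive integers $[\delta^{\min}_{r,t},\delta^{\max}_{r,t}]$ --- but concavity together with the extreme-maximizer bound still does not bound $\delta^{\min}_{r,t}$ from below, so it does not suffice, and the refinement of the hybrid-solution/graph-cover argument described above is genuinely needed. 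Once ($\star$) is available, the corollary is the two-line deduction of the first paragraph.
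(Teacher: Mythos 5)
Your opening reduction is logically fine, and you are right that the bare statement of Theorem~\ref{thm:dp} (which controls only $\delta^{\max}_{r,t}$ and $\delta^{\min}_{r,s}$) does not by itself handle an arbitrary element of $\delta_{r,t}\cap\delta_{r,s}$. The gap is in the claimed proof of your strengthening ($\star$), and in fact ($\star$) is false. Re-run the hybrid construction of Section~\ref{subsec:divergence_proof} starting from an arbitrary $\vz\in\OPTDP(r,t)$ with $z_{(r)}=\beta<x^{\max}_r$: the two conclusions it yields are $\tilde{w}(\TScapE)\geq\tilde{w}(\TScapO)$ (from optimality of $\vx$ against $\avg(\vy)$) and that $\theta$ has value at least that of $\vz$. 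Both are \emph{weak} inequalities, so the dichotomy you invoke does not fire when everything is tied: $\avg(\vy)$ is then just another point of $\OPTLP$ and $\theta$ is just another point of $\OPTDP(r,t)$, with $\theta_{(r)}=\beta+1$. The only contradiction available is with the \emph{maximality} of $z_{(r)}$ in $\delta_{r,t}$ --- which is exactly why the paper's proof of Theorem~\ref{thm:dp} must start from $z_{(r)}=\delta^{\max}_{r,t}$. For a non-extremal $\beta$ you only learn that $\beta+1\in\delta_{r,t}$, not that $\beta\notin\delta_{r,t}$. A concrete counterexample to ($\star$): two variables $u,v$ with $w_u=w_v=1$, $X_u=X_v=1$, and the single constraint $x_u+x_v\leq 1$. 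Here $x^{\min}_u=0<1=x^{\max}_u$, and every optimal assignment to the computation tree has $z_{(u)}+z_{(u,C,v)}=1$, so $\delta_{u,t}=\{0,1\}$ for every $t\geq 1$; the element $0\in\delta_{u,t}$ violates ($\star$) at even $t$ (and, incidentally, shows that the set-intersection form of the corollary is itself delicate in the presence of such ties).

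The paper's own proof is the one-line squeeze $x^{\max}_r\leq\delta^{\max}_{r,t}=\beta=\delta^{\min}_{r,s}\leq x^{\min}_r\leq x^{\max}_r$; that is, it applies Theorem~\ref{thm:dp} to the particular element $\beta=\delta^{\max}_{r,t}=\delta^{\min}_{r,s}$, which is the value the algorithm actually outputs at an even and at an odd iteration (this is the reading supported by the sentence preceding the corollary, about \msp\ ``outputting the same value in two consecutive iterations''). For that distinguished element no strengthening of the theorem is needed and the chain closes immediately. So your instinct that something beyond the literal statement of Theorem~\ref{thm:dp} is required for a general element of the intersection is sound, but the route through ($\star$) cannot be repaired; the corollary should be proved (and read) for the extremal element $\delta^{\max}_{r,t}=\delta^{\min}_{r,s}$ rather than for an arbitrary common value.
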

\begin{proof}
  If $\beta\in \delta_{r,t}\cap \delta_{r,s}$, then by
  Theorem~\ref{thm:dp}, $x^{\max}_r \leq \delta^{\max}_{r,t} = \beta = \delta^{\min}_{r,s} \leq x^{\min}_r$.
\end{proof}

Previous works on the min-sum algorithm for optimization problems
define the case that $\delta_{r,t}$ contains more than one element as
a failure.  Under this restricted interpretation, Theorem~\ref{thm:dp}
and Corollary~\ref{coro:msp-divergence} imply a necessary condition
for the convergence of the \msp\ algorithm.  Namely, $\OPTLP$ must
contain a unique optimal solution and this optimal solution must be
integral.  Indeed, if $x^{\min}_r < x^{\max}_r$, then $\hat{x}_r$
oscillates above and below the interval $(x^{\min}_r,x^{\max}_r)$
between even and odd iterations.

Analogous results holds for covering problems. We state only the theorem that is analogous to Theorem~\ref{thm:dp}.
Redefine $\OPTLP$ so that it denotes the set of optimal solutions of the covering LP, i.e.,
\[
\OPTLP\eqdf \argmin\big\{\vw^T\cdot \vx~\big\vert~\mathbf{A}\cdot \vx \geq
  \vb,\vx\in\rbox(\calX)\big\}.\]
Let $\langle G,\Psi,\Phi,\calX\rangle$ denote the factor graph model of the covering LP.

\begin{theorem}[weak oscillation]\label{thm:dp covering}
Consider an execution of \msc$(\langle G,\Psi,\Phi,\calX\rangle,t)$.
For every variable vertex $r\in \calV$ the following holds:
    \begin{enumerate}
    \item If $t$ is even, then $\min \{\beta \mid \beta\in \delta_{r,t}\} \leq x^{min}_r$.
    \item If $t$ is odd, then $\max \{\beta \mid \beta\in\delta_{r,t}\} \geq x^{max}_r$.
    \end{enumerate}
\end{theorem}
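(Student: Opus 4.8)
\textbf{Proof plan for Theorem~\ref{thm:dp covering}.} The plan is to obtain the covering statement directly from the packing statement (Theorem~\ref{thm:dp}) through the complementation reduction that already underlies Algorithm~\msc, so that essentially no new argument is needed. Recall from Algorithm~\ref{alg:MP-cover} that an execution of \msc\ on the covering instance is, by definition, an execution of \msp\ on the packing instance $\mathrm{P}'$ with constraint vector $\vd\triangleq\mathbf{A}\cdot\calX-\vb$, followed by the substitution $\hat{\vz}=\calX-\hat{\vx}$. I will write $\OPTLP'$, ${x'}^{\min}_r$, ${x'}^{\max}_r$ and $\delta'_{r,t}$ (with ${\delta'}^{\min}_{r,t}$, ${\delta'}^{\max}_{r,t}$) for the quantities of Theorem~\ref{thm:dp} attached to $\mathrm{P}'$, and $\varphi(\vx)\triangleq\calX-\vx$ for the complementation map from the proof of Claim~\ref{claim:CLPtoPLP}.

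First I would handle the LP side. Claim~\ref{claim:CLPtoPLP} gives $\OPTLP=\calX-\OPTLP'$. Since $\varphi$ acts on the $r$-th coordinate as the order-reversing bijection $x_r\mapsto X_r-x_r$, it swaps the minimum and the maximum of that coordinate over the two optimal sets; hence $x^{\min}_r=X_r-{x'}^{\max}_r$ and $x^{\max}_r=X_r-{x'}^{\min}_r$.

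Next I would pin down the algorithm side, namely that the set $\delta_{r,t}$ of the covering execution equals the coordinatewise complement $\{X_r-\beta\mid\beta\in\delta'_{r,t}\}$, so that $\delta^{\min}_{r,t}=X_r-{\delta'}^{\max}_{r,t}$ and $\delta^{\max}_{r,t}=X_r-{\delta'}^{\min}_{r,t}$. Read through the two-line reduction of Algorithm~\ref{alg:MP-cover} this is immediate from $\hat{\vz}=\calX-\hat{\vx}$ together with the parity-dependent tie-breaking (even $t$: maximum of $\delta'$; odd $t$: minimum of $\delta'$), which maps to the covering rule (even: minimum; odd: maximum). If one instead wants this phrased via the ``direct min-sum formulation of \msc'' mentioned after Claim~\ref{claim:CLPtoPLP}, I would verify that its messages coincide with those of \msp\ on $\mathrm{P}'$ under the substitution $z_v=X_v-y_v$: the covering factor $\psi_C$ of~\eqref{eqn:covering_factor} with threshold $b_C$ becomes the packing factor $\psi_C$ of~\eqref{eqn:packing_factor} with threshold $d_C=\sum_{v\in\calN_G(C)}X_v-b_C$, and the affine reparametrization $\phi_v(X_v-\gamma)=w_vX_v-w_v\gamma$ only negates each $\mu_v(\cdot)$ and adds a $\gamma$-independent constant, which leaves the argmin/argmax structure --- hence $\delta_{r,t}$ up to complementation --- intact. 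This message-level bookkeeping (checking that no other term in the recursion of Lines~\ref{line:2a}--\ref{line:3a} is affected, and that signs propagate correctly through the alternation of $\max$ with sums) is the only place where care is required; it contains no idea beyond Claim~\ref{claim:CLPtoPLP}, and is the step I expect to be the main, if routine, obstacle.

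Finally I would assemble the pieces. If $t$ is even, Theorem~\ref{thm:dp}(1) applied to $\mathrm{P}'$ gives ${\delta'}^{\max}_{r,t}\geq{x'}^{\max}_r$, whence $\min\{\beta\mid\beta\in\delta_{r,t}\}=X_r-{\delta'}^{\max}_{r,t}\leq X_r-{x'}^{\max}_r=x^{\min}_r$, which is Part~1. If $t$ is odd, Theorem~\ref{thm:dp}(2) gives ${\delta'}^{\min}_{r,t}\leq{x'}^{\min}_r$, whence $\max\{\beta\mid\beta\in\delta_{r,t}\}=X_r-{\delta'}^{\min}_{r,t}\geq X_r-{x'}^{\min}_r=x^{\max}_r$, which is Part~2. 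Everything outside the message-commutation check is a direct substitution.
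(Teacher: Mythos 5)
Your proposal is correct and follows exactly the route the paper intends: the paper never writes out a proof of Theorem~\ref{thm:dp covering}, stating only that the covering result ``is obtained by a simple reduction (see Claim~\ref{claim:CLPtoPLP}) that applies complementation,'' and your argument is precisely that reduction carried out in full --- complementing $\OPTLP$ and $\delta_{r,t}$ via $\vx\mapsto\calX-\vx$ and invoking Theorem~\ref{thm:dp} on the packing instance with constraint vector $\vd=\mathbf{A}\cdot\calX-\vb$, with the parity swap of $\min$/$\max$ handled correctly. The only detail worth a passing remark is that $\vd\geq\mathbf{0}$ (so that the complemented instance is a legitimate \pip) whenever the covering LP is feasible, which is the only case in which the statement has content.
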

\medskip
\noindent
See Appendix~\ref{app:convergence_proof} for a discussion of
the convergence of the min-sum algorithm.
  %%%%%%%%%%%%%%%%%%%%%%%%%%%%%%%%%%%%%%%%%%%%%%%%%%%%%%%%%%%
  \section{Graph Liftings}\label{sec:graph-covers}
  %%%%%%%%%%%%%%%%%%%%%%%%%%%%%%%%%%%%%%%%%%%%%%%%%%%%%%%%%%%
In this section we briefly review the definition of
graph coverings, state a combinatorial characterization
  based on~\cite{RT12}, and show how the girth can be arbitrarily
  increased.

  \subsection{Covering Maps and Liftings }

  \begin{definition}[covering\footnote{ The term covering is used
    both for optimization problems called covering problems and for
    topological mappings called covering maps.  }  map~\cite{AL02}]\label{def:covering_map}
  Let $G=(V,E)$ and $\tilde{G}=(\tilde{V},\tilde{E})$ denote finite
  graphs.  A graph homomorphism $\pi:\tilde{G}\rightarrow G$ is a
  \emph{covering map} if for every $\tilde{v} \in \tilde{V}$ the
  restriction of $\pi$ to neighbors of $\tilde{v}$ is a bijection to the
  neighbors of $\pi(\tilde{v})$.
  \end{definition}
  \noindent We refer only to finite covering maps. The pre-image
  $\pi^{-1}(v)$ of a vertex $v$ is called the \emph{fiber} of $v$. It is easy to
  see that all the fibers have the same cardinality if $G$ is connected.
  This common cardinality is called the \emph{degree} or \emph{fold
    number} of the covering map.  If $\pi: \tilde{G} \rightarrow G$ is a
  covering map, we call $G$ the \emph{base graph} and $\tilde{G}$ a
  \emph{lift} of $G$. In the case where
  the fold number of the covering map is $M$, we say that $\tilde{G}$ is
  an \emph{$M$-lift} of $G$.

    \begin{figure}
      \centering \subfigure[Base graph $G$.]{
        \includegraphics[width=0.13\textwidth]{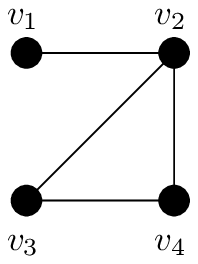}
        \label{fig:basegraph}
      }~~~~~~~~~~~~~~~ \subfigure[An $M$-lift of $G$.]{
        \includegraphics[width=0.35\textwidth]{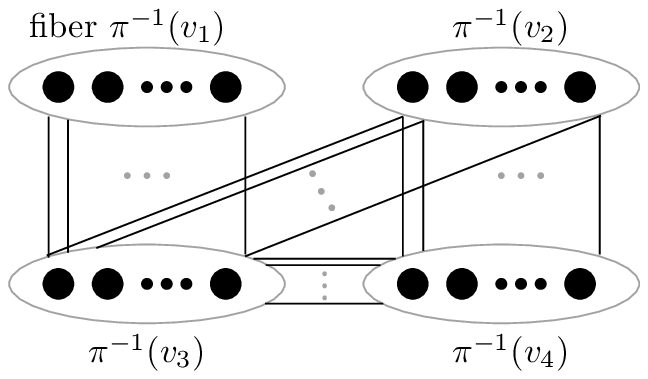}
        \label{fig:covergraph}
      }
      \caption{An $M$-lift of a base graph $G$: (1)~fiber
        $\pi^{-1}(v_i)$ consists of $M$ copies of $v_i$, and (2)~for
        each edge $(v_i,v_j)$ in $G$, the set of edges between fibers
        $\pi^{-1}(v_i)$ and $\pi^{-1}(v_j)$ is a
        matching.}\label{fig:covering}
    \end{figure}

  If $G$ is connected, then every $M$-lift of $G$ is isomorphic to an
  $M$-lift that is constructed as follows: (1)~The vertex set is simply
  $\tilde{V}\triangleq V\times[M-1]$ and the covering map is the projection defined by
  $\pi\big((v,i)\big)\triangleq v$. (2)~For every $(u,v) \in E$, the edges in
  $\tilde{E}$ between the fibers of $u$ and $v$ constitute a matching.

  The notion of $M$-lifts in graphs is extended to $M$-lifts of factor
  graph models in a natural manner.
We denote a variable vertex in
  the fiber of by $\tilde{v}$ (so $\pi(\tilde{v})$ is denoted  by $v$).
Each variable vertex $\tilde{v}$ inherits the variable
  function of $v$, namely, $\tilde{w}_{\tilde{v}} \triangleq w_v$.
  Similarly, each constraint variable $\tilde{C}$ inherits the factor
  function of $\pi(\tilde{C})$.  For brevity, we refer to the lifted
  factor graph model $\langle
  \tilde{G},\tilde{\Psi},\tilde{\Phi},\tilde{\calX}\rangle$ of
  $\langle G,\Psi,\Phi,\calX\rangle$ simply as the lift $\tilde{G}$ of
  a factor graph $G$.

  An assignment $\vx$ to the variable vertices $\calV$ of a factor
  graph $G$ is extended to an assignment $\vtx$ over the lift $\tilde{G}$ simply by defining
  $\tilde{x}_{\tilde{v}}\eqdf x_v$.  Note that this extension
  preserves the validity of assignments.

  Every assignment $\vtx$ of an $M$-lift $\tilde{G}$ induces
  an assignment of the base graph $G$ that we call the average assignment.  The \emph{average
    assignment} $\avg(\tilde{\vx})$ is defined by
  \begin{equation}
  \avg(\tilde{\vx})_v \triangleq\frac{1}{M}\cdot\sum_{\tilde{v}\in\pi^{-1}(v)}\tilde{x}_{\tilde{v}}.
  \end{equation}

Consider an
  $M$-lift $\tilde{G}$ of the factor graph $G$ and a valid integral
  assignment $\vtx$ to $\tilde{G}$.  By linearity,
  $\avg(\vtx)$ is a rational valid assignment to $G$. The following
  theorem deals with the converse situation.

  \begin{theorem}[special case of {\cite[Theorem VII.2]{RT12}}]
    \label{thm:nic} For every rational feasible solution $\vx$ of
    an LP, there exists an $M$, an $M$-lift $\tilde{G}$, and an
    integral valid assignment $\vtx$ to $\tilde{G}$ such that $\vx = \avg(\vtx)$.
  \end{theorem}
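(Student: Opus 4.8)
The plan is to construct the lift explicitly. It suffices to treat a packing LP $\{\vA\vx\le\vb,\ \vx\in\rbox(\calX)\}$; the covering case $\{\vA\vx\ge\vb\}$ is symmetric (or reduces to packing via Claim~\ref{claim:CLPtoPLP}). Let $\vx$ be rational and feasible and let $M$ be a common denominator of $x_1,\dots,x_n$, so that $x_i=\lfloor x_i\rfloor+b_i/M$ with $b_i\triangleq M(x_i-\lfloor x_i\rfloor)\in\{0,\dots,M-1\}$. I take $\tilde G$ to be an $M$-lift of $G$ on vertex set $(\calV\cup\calC)\times[M-1]$, whose matchings I fix below. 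For the fiber of each variable vertex $v_i$ I designate $b_i$ copies as \emph{ceil-copies}, set to value $\lceil x_i\rceil$, and set the remaining $M-b_i$ copies to $\lfloor x_i\rfloor$. Since $0\le x_i\le X_i$ and $X_i\in\N$, both $\lfloor x_i\rfloor$ and $\lceil x_i\rceil$ lie in $\{0,\dots,X_i\}$, so $\vtx$ is a legal integral assignment, and $\avg(\vtx)_{v_i}=\lfloor x_i\rfloor+b_i/M=x_i$ for every $i$. It remains to choose the matchings so that $\vtx$ is valid.

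Fix a constraint vertex $C_j$, and set $L_j\triangleq\sum_{v_i\in\calN_G(C_j)}\lfloor x_i\rfloor$ and $F_j\triangleq\sum_{v_i\in\calN_G(C_j)}b_i/M$. Feasibility of $\vx$ gives $L_j+F_j\le b_j$, and since $L_j,b_j\in\mathds{Z}$ this implies $L_j+\lceil F_j\rceil\le b_j$. In the lift, each of the $M$ copies of $C_j$ is adjacent to exactly one copy of each $v_i\in\calN_G(C_j)$ (because a covering map is a bijection on neighborhoods), so the constraint sum at that copy equals $L_j$ plus the number of ceil-copies among its variable-neighbors. Hence it suffices, for each $C_j$ in isolation, to choose the matchings on the edges $(v_i,C_j)$ so that no copy of $C_j$ is adjacent to more than $\lceil F_j\rceil$ ceil-copies; and these choices are independent across the constraint vertices, since the fiber assignments enter only through the numbers $b_i$.

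To make this choice for a fixed $C_j$, I would use a greedy load-balancing step: regard the $M$ copies of $C_j$ as bins and process the neighbors $v_i\in\calN_G(C_j)$ one by one, placing a token in each of the $b_i$ currently least-loaded bins (recording that those copies of $C_j$ will be matched to ceil-copies of $v_i$; since there are exactly $b_i$ ceil-copies of $v_i$, a matching realizing this exists). A one-line induction shows that the multiset of bin loads always consists of at most two consecutive integer values, so after all neighbors are processed the maximum load is $\lceil(\sum_{v_i\in\calN_G(C_j)}b_i)/M\rceil=\lceil F_j\rceil$. Doing this for every $C_j$ produces an $M$-lift $\tilde G$ together with an integral valid assignment $\vtx$ satisfying $\avg(\vtx)=\vx$, as required. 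For the covering case one uses the same lemma to guarantee a \emph{minimum} load of $\lfloor F_j\rfloor\ge b_j-L_j$.

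The step I expect to require the most care is the choice of fiber assignments. An arbitrary integral assignment on a fiber whose average is $x_i$ need not extend to a valid lift (taking values far from $x_i$ can make every copy of an incident constraint infeasible); using the rounding to $\{\lfloor x_i\rfloor,\lceil x_i\rceil\}$ is what keeps each copy's constraint sum within less than $1$ (per coordinate) of $\sum_{v_i\in\calN_G(C_j)}x_i$, and the load-balancing lemma is precisely the device that upgrades this coordinate-wise near-feasibility to exact feasibility of every single lifted copy.
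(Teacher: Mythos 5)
Your construction is correct, and it is worth pointing out that the paper does not actually prove Theorem~\ref{thm:nic}: it is quoted as a special case of Ruozzi--Tatikonda's Theorem~VII.2, so what you have produced is a self-contained elementary proof of a result the paper only cites. The argument checks out: with exactly $b_i=M(x_i-\lfloor x_i\rfloor)$ ceil-copies in the fiber of $v_i$, the sum seen by a lifted copy of $C_j$ is $L_j$ plus the number of incident ceil-copies; the matchings on distinct edges of $G$ are genuinely independent choices (a matching between the fibers of $v_i$ and $C_j$ does not constrain the one between the fibers of $v_i$ and $C_{j'}$, even though they share a fiber); and the two-consecutive-values invariant of the greedy load balancer does yield maximum load exactly $\lceil F_j\rceil$, whence $L_j+\lceil F_j\rceil\le b_j$. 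Compared with invoking \cite{RT12}, your proof buys transparency about hypotheses: the step ``$L_j,b_j$ integral implies $L_j+\lceil F_j\rceil\le b_j$'' is precisely where integrality of $\vb$ enters, and the theorem is in fact false without it --- for the single constraint $x_1\le 1/2$ with $X_1=1$, the feasible point $x_1=1/2$ is not the average of any integral valid assignment on any lift, since every lifted copy of the constraint forces its variable-copy to $0$. Since Definition~\ref{def:C-P_IP_LP} allows $\vb\in\R_+^m$ while only the abstract restricts to integral $\vb$, your proof makes explicit a restriction that the phrase ``an LP'' in the theorem statement hides. The price of the specialization is generality: the cited theorem handles factor functions well beyond single $0$-$1$ linear inequalities, whereas your load-balancing step relies on the lifted constraint sum differing from $L_j$ by exactly the count of incident ceil-copies, which uses the zero-one structure of $\vA$.
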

  Note that all the basic feasible solutions (extreme points) of the
  packing LP and the covering LP are rational.

  \subsection{Increasing Girth}
The following proposition deals with obtaining lifts with large girth.
\begin{proposition}\label{prop:doubleGirth}
 There exists a finite lift $\tilde{G}$ of $G$ such that $\girth(\tilde{G})\geq2\cdot\girth(G)$.
\end{proposition}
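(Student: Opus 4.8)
The plan is to construct $\tilde{G}$ by choosing, for each edge of $G$, a random permutation to connect the two fibers, and show that the resulting lift has girth at least $2\cdot\girth(G)$ with positive probability (so in particular such a lift exists). Concretely, fix a fold number $M$, set $\tilde{V}\triangleq V\times[M-1]$, and for each edge $(u,v)\in E$ pick a permutation $\sigma_{uv}$ of $[M-1]$ uniformly and independently at random; the edges between the fibers of $u$ and $v$ are $\{((u,i),(v,\sigma_{uv}(i)))\}_i$. By the characterization recalled in Section~\ref{sec:graph-covers} this is a valid $M$-lift. First I would observe that any cycle in $\tilde{G}$ of length $\ell$ projects under $\pi$ to a closed walk of length $\ell$ in $G$; since $G$ has no cycle of length less than $\girth(G)$, a short cycle in $\tilde{G}$ must project to a closed walk that is \emph{not} a simple cycle, i.e.\ it must use some edge of $G$ twice (in either direction). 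In particular, any cycle in $\tilde{G}$ of length strictly less than $2\cdot\girth(G)$ projects to a non-backtracking closed walk in $G$ of length $<2\cdot\girth(G)$ that traverses at least one edge twice.

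Next I would bound the probability that such a short cycle exists. Enumerate the (finitely many) non-backtracking closed walks $W$ in $G$ of length $\ell<2\cdot\girth(G)$ that repeat an edge. For a fixed base point and a fixed such walk $W$, the number of ways it can lift to an actual cycle in $\tilde{G}$ starting at a given vertex in the fiber is controlled by the random permutations: walking along $W$ determines the fiber-index of each visited vertex as a composition of the $\sigma_{uv}^{\pm1}$'s, and closing up the cycle imposes a constraint of the form ``a product of these permutations fixes the starting index.'' Because $W$ repeats an edge, there is at least one constraint on the permutations that is genuinely nontrivial (a single permutation being evaluated at a second, independent point, or a product being forced to fix a point), and a uniformly random permutation of $[M-1]$ fixes a given point with probability $1/M$ (more carefully, one isolates a ``new'' permutation-evaluation whose value is uniform and independent of everything already exposed, conditioned on which the closing constraint holds with probability $O(1/M)$). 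Summing over the $O(M)$ choices of base vertex in a fiber and the $O(1)$ walks $W$ (a number depending only on $G$, not on $M$), the expected number of short cycles is $O(1)\cdot M\cdot (1/M)=O(1)$, and with a bit more care one gets it to be $o(1)$ or at least strictly less than... — actually the cleanest route is: the expected number of cycles of length $<2\,\girth(G)$ tends to a constant $c(G)$ as $M\to\infty$, and a Poisson-type / second-moment argument shows $\Pr[\text{no such cycle}]$ is bounded away from $0$; hence for suitable $M$ a lift with $\girth(\tilde{G})\ge 2\,\girth(G)$ exists.

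An even cleaner alternative, which I would present if the probabilistic bookkeeping gets heavy, is \emph{iterated $2$-lifts}: it is a standard fact that a single random $2$-lift increases the girth with positive probability in a controlled way, and more to the point one can argue directly that a random $2$-lift (assign each edge a sign in $\{0,1\}$) destroys every fixed short cycle with probability $\ge 1/2$ unless it is ``sign-forced,'' and iterating a bounded number of times kills all closed walks of length $<2\,\girth(G)$ that repeat an edge; since any remaining short cycle must then be a lift of a genuine simple cycle of $G$, its length is $\ge\girth(G)$ and in fact... — this variant still needs the projection argument above to rule out lengths in $[\girth(G),2\,\girth(G))$, which again reduces to repeated-edge closed walks. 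Either way, the main obstacle is the combinatorial/probabilistic estimate in the middle step: correctly identifying, for a closed walk that reuses an edge, a permutation-evaluation that is fresh and uniform so that the closing constraint is satisfied with probability $O(1/M)$, and then making sure the union bound over base points and walks stays bounded as $M$ grows. The projection-to-base-graph reduction and the validity of the lift are routine given Section~\ref{sec:graph-covers}.
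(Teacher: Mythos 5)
There is a genuine gap, on two counts. First, your reduction to ``repeated-edge closed walks'' is false as stated: a cycle in $\tilde{G}$ of length $\ell$ with $\girth(G)\leq \ell<2\cdot\girth(G)$ can perfectly well project to a \emph{simple} cycle $\gamma$ of $G$ (e.g.\ a girth cycle), using each edge exactly once. Whether such a $\gamma$ lifts to a cycle of length $\ell$ (rather than a multiple of $\ell$) is governed by whether the monodromy permutation $\sigma_{e_1}^{\pm1}\cdots\sigma_{e_\ell}^{\pm1}$ around $\gamma$ has a fixed point, and a product of independent uniform permutations has a fixed point with probability tending to $1-1/e$ --- so these ``bad'' short cycles are \emph{not} rare events you can ignore, and they do not reduce to repeated-edge walks as you assert at the end. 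Any correct argument must kill the fixed points of the monodromy of every simple cycle of length $<2\cdot\girth(G)$, not only the constraints coming from walks that reuse an edge. Second, even for the events you do analyze, bounding the \emph{expected} number of short cycles by $O(1)$ does not give $\Pr[\text{no short cycle}]>0$; you would need the Poisson limit (or a second-moment / Lov\'{a}sz-local-lemma style argument) that you name but do not carry out, and that is the technically heavy part of the random-lift route.

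The paper avoids all of this with a short deterministic construction: take the $2^{|E|}$-lift in which the fiber indices are binary strings of length $|E|$ and the matching over edge $e_i$ flips the $i$-th bit. The monodromy of any closed walk is then the XOR of the indicator vector of the edges traversed, so a walk lifts to a closed walk only if every edge appears an even number of times in its projection; since a non-backtracking closed walk contains a cycle and hence uses at least $\girth(G)$ distinct edges, every cycle of $\tilde{G}$ has length at least $2\cdot\girth(G)$. This one observation simultaneously disposes of the repeated-edge walks and of the lifts of simple short cycles (each edge appearing once is odd), with no probability involved. I recommend you adopt that construction, or, if you wish to keep the random-lift approach, explicitly (i) include the fixed-point-free condition on the monodromy of every simple cycle of length $<2\cdot\girth(G)$ among your bad events and (ii) supply the Poisson/second-moment estimate rather than asserting it.
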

  \begin{proof}
    Given a graph $G=(V,E)$, we construct a $2^{\abs{E}}$-lift
    $\tilde{G}=(\tilde{V},\tilde{E})$ as follows.  Let $k=\abs{E}$.
    The vertices in each fiber of $\tilde{G}$ are indexed by a binary
    string of length $k$.  Index the edges in $E$ by
    $\{e_1,\ldots,e_k\}$.  For an edge $e_i=(u,v)$, the matching
    between the fiber of $u$ and the fiber of $v$ is induced simply by
    flipping the $i$'th bit in the index. Namely, $u_{\langle b_1
      \ldots b_i \ldots b_k\rangle}\mapsto v_{\langle b_1 \ldots
      \overline{b_i} \ldots b_k\rangle}$.

    Consider a cycle $\tilde{\gamma}$ in $\tilde{G}$ and its
    projection $\gamma$ in $G$. Each edge $e_i$ in $\gamma$ must
    appear an even number of times. Otherwise, the $i$'th bit is
    flipped an odd number of times in $\tilde{\gamma}$, and
    $\tilde{\gamma}$ can not be a cycle. It follows that
    $\girth(\tilde{G})\geq2\cdot\girth(G)$.
  \end{proof}

\medskip\noindent
By applying Proposition~\ref{prop:doubleGirth} repeatedly, we have the following corollary.
\begin{corollary}\label{coro:girth}
Consider a graph $G$. Then for any finite $\ell\in\N$ there exists a finite lift $\tilde{G}$ of $G$ such that $\girth(\tilde{G})\geq2^\ell$.
\end{corollary}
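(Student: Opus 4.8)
The plan is a straightforward induction on $\ell$, with Proposition~\ref{prop:doubleGirth} supplying the inductive step. The one point requiring (minor) care is that a lift of a lift is again a finite lift of the original graph, so that iterating the proposition stays within the class of objects we are allowed to produce.

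First I would dispose of the trivial cases. If $G$ is a forest then it has no cycles, so $\girth(G)=\infty$ and any lift (for instance $G$ itself) already satisfies the claim; hence we may assume $G$ contains a cycle, so $\girth(G)\geq 3$. For $\ell\leq 1$ we then have $2^\ell\leq 2\leq\girth(G)$, and $\tilde G=G$ works. For the inductive step, suppose $\ell\geq 2$ and that there is a finite lift $\tilde G'$ of $G$, via a covering map $\pi'\colon\tilde G'\to G$, with $\girth(\tilde G')\geq 2^{\ell-1}$. Applying Proposition~\ref{prop:doubleGirth} to $\tilde G'$ yields a finite lift $\tilde G$ of $\tilde G'$, via a covering map $\pi''\colon\tilde G\to\tilde G'$, with $\girth(\tilde G)\geq 2\cdot\girth(\tilde G')\geq 2^\ell$.

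It then remains to verify that $\tilde G$ is a finite lift of $G$, i.e., that $\pi\triangleq\pi'\circ\pi''\colon\tilde G\to G$ is a (finite) covering map. It is clearly a graph homomorphism; and for any vertex $\tilde v\in\tilde V$, the restriction of $\pi''$ to $\calN_{\tilde G}(\tilde v)$ is a bijection onto $\calN_{\tilde G'}(\pi''(\tilde v))$, while the restriction of $\pi'$ to $\calN_{\tilde G'}(\pi''(\tilde v))$ is a bijection onto $\calN_{G}(\pi(\tilde v))$; composing these two bijections shows that $\pi$ is a covering map. Finiteness is preserved since the fold number of $\pi$ equals the product of the fold numbers of $\pi'$ and $\pi''$. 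This closes the induction. I do not anticipate any genuine obstacle here; the only things to be careful about are not forgetting the forest case (where the girth is infinite and the statement is vacuous) and the composition-of-covers bookkeeping just described.
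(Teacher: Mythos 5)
Your proof is correct and follows essentially the same route as the paper, which simply applies Proposition~\ref{prop:doubleGirth} repeatedly; you have just made explicit the two details the paper leaves implicit (the degenerate forest/base cases and the fact that a composition of covering maps is again a finite covering map).
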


%%%%%%%%%%%%%%%%%%%%%%%%%%%%%%%%%%%%%%%%%%%%%%%%%%%%%%%%%%%
\section{Proof of Main Results}\label{sec:proofs}
%%%%%%%%%%%%%%%%%%%%%%%%%%%%%%%%%%%%%%%%%%%%%%%%%%%%%%%%%%%

\subsection{Min-Sum as a Dynamic Programming on Computation Trees}
Given a graph $G=(V,E)$ and a vertex $r\in V$. The path-prefix tree of height $h$ is defined as follows.
\begin{definition}[Path-Prefix Tree]\label{def:ppt}
  Let $\hat{V}$ denote the set of all paths
  with length at most $h$ without backtracks that start at vertex $r$. Let $\hat{E}
  \triangleq \big\{(p_1,p_2)\in\hat{V}\times\hat{V}\ \big\vert\ p_1\
  \mathrm{is~a~prefix~of~}p_2,~\lvert p_1\rvert+1=\lvert p_2\rvert
  \big\}$.    The
  directed graph $(\hat{V},\hat{E})$ is called the \emph{path-prefix
    tree} of $G$ rooted at vertex $r$ with height $h$, and is denoted
  by $\calT_r^{h}(G)$.
\end{definition}
\noindent
We denote the zero-length path in $\hat{V}$ by $(r)$.
The graph $\calT_r^{h}(G)$ is obviously acyclic and is an out-tree rooted at $(r)$. Path-prefix trees of $G$ that are rooted in variable vertices are often called \emph{computation trees of $G$} or \emph{unwrapped trees of $G$}.

  We use the following notation. Vertices in $\calT_r^{h}(G)$ are
  paths in $G$, and are denoted by $p$ and $q$ whereas variable
  vertices in $G$ are denoted by $u,v,r$. For a path $p\in\hat{V}$,
  let $t(p)$ denote the last vertex (i.e., target) of path $p$.

Consider a path-prefix tree $\calT_r^h(G)$ of a factor graph
$G=(\calV\cup\calC,E)$.
We denote the vertex set of $\calT_r^h(G)$ by $\hat{\calV}\cup \hat{\calC}$, where
$\hat{\calV}$ denotes paths that end in a variable vertex, and $\hat{\calC}$
denotes paths that end in a constraint vertex.
Paths in $\hat{\calV}$ are called \emph{variable paths}, and paths in
$\hat{\calC}$ are called \emph{constraint paths}.  We attach variable
functions $\hat{\phi}_p$ to variable paths $p$, and factor functions
$\hat{\psi}_q$ to constraint paths; each vertex $p$ inherits the
function of its endpoint. The box constraint for a variable path $p$ that ends at vertex $v$ is
defined by $\hat{X}_p \eqdf X_v$.

In the following lemma, the \msp\ algorithm is interpreted as a
dynamic programming algorithm over the path-prefix trees (see e.g.,
\cite[Section 2]{GSW12}).
\begin{lemma}\label{lemma:min-sumDP}
  Consider an execution of $\msp(\langle G,\Psi,\Phi,\calX\rangle,t)$.
  Consider the computation tree
  $\calT_r^{2t}(G)=(\hat{\calV}\cup\hat{\calC},\hat{E})$.  For every
  variable vertex $r\in\calV$ and $\beta\in\{0,\ldots,X_r\}$,
\begin{equation*}
\mu_r(\beta)=\max\bigg\{
\sum_{p\in\hat{\calV}}\hat{\phi}_p(\hat{z}_{p})+
\sum_{q\in\hat{\calC}}\hat{\psi}_q(\hat{z}_{\calN (q)})
\bigg\vert
~\forall p\in\hat{\calV}. \hat{z}_p\in\zbox(\hat{X}_{p}),~\hat{z}_{(r)}=\beta\bigg\}.
\end{equation*}
\end{lemma}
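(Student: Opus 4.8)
The plan is to prove the lemma by a ``peeling'' induction on the iteration counter, identifying each message produced by \msp\ with the value of the dynamic program in the statement, restricted to a subtree of the computation tree $\calT_r^{2t}(G)$. For a variable path $p=q\cdot v$ of $\calT_r^{2t}(G)$ (so $t(q)=C\in\calN(v)$) write $\calT_{\downarrow p}$ for the full subtree of $\calT_r^{2t}(G)$ hanging below $p$, i.e.\ $p$ together with all of its descendants, and set
\[
\mathrm{dp}(\calT_{\downarrow p},\beta)\eqdf\max\bigg\{\sum_{p'\in\hat{\calV}\cap\calT_{\downarrow p}}\hat{\phi}_{p'}(\hat{z}_{p'})+\sum_{q'\in\hat{\calC}\cap\calT_{\downarrow p}}\hat{\psi}_{q'}(\hat{z}_{\calN(q')})~\bigg\vert~\hat{z}_{p}=\beta,\ \hat{z}_{p'}\in\zbox(\hat{X}_{p'})\ \forall p'\bigg\}.
\]
For a constraint path $q=p'\cdot C$ define $\mathrm{dp}(\calT_{\downarrow q},\beta)$ in the same way, except that the constraint $\hat{\psi}_q$ at the root of $\calT_{\downarrow q}$ is evaluated by feeding it $\beta$ for the (absent) parent variable path of $q$ together with the chosen values of the children of $q$. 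The invariant I will establish by induction on $t'\in\{1,\ldots,t\}$ is: for every edge $(v,C)\in E$, every $\beta\in\{0,\ldots,X_v\}$, every variable path $p=q\cdot v$ of $\calT_r^{2t}(G)$ with $t(q)=C$ and $\lvert p\rvert=2(t-t'+1)$, and every constraint path $q=p'\cdot C$ with $t(p')=v$ and $\lvert q\rvert=2(t-t'+1)-1$, we have $\mu^{(t')}_{v\to C}(\beta)=\mathrm{dp}(\calT_{\downarrow p},\beta)$ and $\mu^{(t')}_{C\to v}(\beta)=\mathrm{dp}(\calT_{\downarrow q},\beta)$.

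Two observations make the recursions line up. First, since paths in $\calT_r^{2t}(G)$ carry no backtracks, the children of a variable path $p=q\cdot v$ are exactly the constraint paths $p\cdot C'$ with $C'\in\calN(v)\setminus\{C\}$, and the children of a constraint path $q=p'\cdot C$ are exactly the variable paths $q\cdot u$ with $u\in\calN(C)\setminus\{v\}$ — precisely the index sets of Lines~\ref{line:2a} and~\ref{line:2b}. Second, $\calT_{\downarrow p}$ viewed as a tree labelled by the endpoints of its paths (which fix the attached $\hat\phi$, $\hat\psi$ and $\hat X$) depends only on the triple $(v,C,2t-\lvert p\rvert)$, so $\mathrm{dp}(\calT_{\downarrow p},\beta)$ does not depend on which path $p$ of the prescribed endpoint, origin vertex and length we choose; hence the invariant is well posed even when $G$ has short cycles. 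The base case $t'=1$ is direct: a variable path $p$ with $\lvert p\rvert=2t$ is a leaf, so $\mathrm{dp}(\calT_{\downarrow p},\beta)=\hat\phi_p(\beta)=\phi_v(\beta)=\mu^{(1)}_{v\to C}(\beta)$ (the zero initialization kills the sum in Line~\ref{line:2a}), and then the depth-$(2t-1)$ constraint case maximizes $\sum_{u\in\calN(C)\setminus\{v\}}\phi_u(z_u)$ over $z_v=\beta,\ \psi_C(\vz)=0$, which is $\mu^{(1)}_{C\to v}(\beta)$. For the step, fix $p=q\cdot v$ with $\lvert p\rvert=2(t-t'+1)$; freezing $\hat z_p=\beta$ decouples $\calT_{\downarrow p}$ into $p$ and the pairwise vertex-disjoint subtrees $\calT_{\downarrow(p\cdot C')}$ ($C'\in\calN(v)\setminus\{C\}$), each of depth one greater, so $\mathrm{dp}(\calT_{\downarrow p},\beta)=\hat\phi_p(\beta)+\sum_{C'}\mathrm{dp}(\calT_{\downarrow(p\cdot C')},\beta)$; by the hypothesis for $t'-1$ the summands are $\mu^{(t'-1)}_{C'\to v}(\beta)$, reproducing Line~\ref{line:2a}. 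Similarly, decoupling $\calT_{\downarrow q}$ (with $\lvert q\rvert=2(t-t'+1)-1$) into its children $q\cdot u$ ($u\in\calN(C)\setminus\{v\}$): a choice of child values $z_u$ with parent value $\beta$ is feasible iff $\hat\psi_q$ evaluates to $0$, i.e.\ iff $\beta+\sum_u z_u\le b_C$, and then the objective splits as $\sum_u\mathrm{dp}(\calT_{\downarrow(q\cdot u)},z_u)=\sum_u\mu^{(t')}_{u\to C}(z_u)$ by the part of the invariant just shown for $t'$; maximizing over such $(z_u)$ is exactly Line~\ref{line:2b}, giving $\mu^{(t')}_{C\to v}(\beta)$.

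Finally I would read off the lemma from $t'=t$: the whole tree $\calT_r^{2t}(G)$ is the root $(r)$ together with the disjoint subtrees $\calT_{\downarrow(r\cdot C)}$ over $C\in\calN(r)$ (the depth-$1$ constraint paths), so freezing $\hat z_{(r)}=\beta$ gives that the right-hand side of the lemma equals $\hat\phi_{(r)}(\beta)+\sum_{C\in\calN(r)}\mathrm{dp}(\calT_{\downarrow(r\cdot C)},\beta)=\phi_r(\beta)+\sum_{C\in\calN(r)}\mu^{(t)}_{C\to r}(\beta)$, which is exactly the quantity $\mu_r(\beta)$ computed for the root in the decision step of \msp\ (Line~\ref{line:3}). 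The step I expect to cost the most care is the bookkeeping rather than any inequality: pinning down the correspondence $\lvert p\rvert=2(t-t'+1)$ between tree depth and iteration index so that the one-iteration delay in Line~\ref{line:2a} (which uses $\mu^{(t'-1)}_{C'\to v}$, not $\mu^{(t')}_{C'\to v}$) is matched by the children of $p$ lying one level deeper, and checking that splitting a maximum over a product of children domains into a sum of per-child maxima is legitimate — which it is, because the only couplings among the children subtrees, namely the common parent value and (at a constraint root) the single constraint $\hat\psi_q$, have been frozen. Everything else is the distributivity of $\max$ over $+$.
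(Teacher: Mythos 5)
Your proof is correct, and it supplies an argument the paper itself omits: the paper states Lemma~\ref{lemma:min-sumDP} without proof, deferring to \cite{GSW12}, so there is no in-paper proof to diverge from. Your peeling induction is the standard computation-tree argument, and the bookkeeping you worried about checks out: the depth/iteration correspondence $\lvert p\rvert=2(t-t'+1)$ correctly absorbs the one-iteration lag in Line~\ref{line:2a}, the no-backtrack condition ensures each lifted constraint $\hat{\psi}_q$ sees exactly one copy of each neighbor of $t(q)$ (parent plus children), and establishing the variable-message half of the invariant at level $t'$ before the constraint-message half matches the within-iteration dependence of Line~\ref{line:2b} on $\mu^{(t')}_{u\to C}$.

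One point deserves an explicit flag rather than the word ``exactly.'' Your final step identifies the lemma's right-hand side with $\phi_r(\beta)+\sum_{C\in\calN(r)}\mu^{(t)}_{C\to r}(\beta)$, which is correct for the DP side since the root $(r)$ lies in $\hat{\calV}$ and contributes $\hat{\phi}_{(r)}(\beta)=w_r\beta$. But Line~\ref{line:3a} of Algorithm~\ref{alg:MP-pack} as printed sets $\mu_r(\beta)\gets\sum_{C\in\calN(r)}\mu_{C\to r}^{(t)}(\beta)$ \emph{without} the local term $\phi_r(\beta)$, and since $\phi_r(\beta)=w_r\beta$ is not constant in $\beta$ this discrepancy would change $\delta_{r,t}$. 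The lemma (and Corollary~\ref{cor:msp-dp}, which the rest of the paper relies on) is consistent with the standard decision rule that includes $\phi_r(\beta)$, so the pseudocode is almost certainly missing that term; your proof silently adopts the corrected reading. State this, or equivalently prove the lemma for the belief $\phi_r(\beta)+\sum_C\mu^{(t)}_{C\to r}(\beta)$ and note that Line~\ref{line:3a} should be amended accordingly. With that caveat recorded, the argument is complete.
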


\begin{definition}[optimal assignment]
We say that a valid assignment $\hat{z}$ to the
variable paths in $\calT_r^{2t}(G)$ is \emph{optimal} if it maximizes the objective function
$\sum_{p\in\hat{\calV}}\hat{\phi}_p(\hat{z}_{p})+\sum_{q\in\hat{\calC}}\hat{\psi}_q(\hat{z}_{\calN_{\calT}(q)})$.
\end{definition}

Let $\OPTDP (r,t)$ denote the set of optimal valid assignments to the
variable paths in $\calT_r^{2t}(G)$.
By Line~\ref{line:3b} in algorithm \msp, the following corollary holds.
\begin{corollary}\label{cor:msp-dp}
$\delta_{r,t}= \{z_{(r)} \mid \vz\in \OPTDP(r,t)\}$.
\end{corollary}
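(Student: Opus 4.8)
The plan is to unwind the definitions of $\delta_{r,t}$, $\mu^{\max}_r$, and $\OPTDP(r,t)$, and then invoke Lemma~\ref{lemma:min-sumDP} to identify $\mu_r(\beta)$ with the best attainable value of the tree objective when the root assignment is pinned to $\beta$. Write $J(\hat{z}) \eqdf \sum_{p\in\hat{\calV}}\hat{\phi}_p(\hat{z}_p) + \sum_{q\in\hat{\calC}}\hat{\psi}_q(\hat{z}_{\calN(q)})$ for the objective of the dynamic program on $\calT_r^{2t}(G)$, with the convention $J(\hat{z})=-\infty$ when a constraint path is violated, and let $V^\ast \eqdf \max\{J(\hat{z})\mid \forall p\in\hat{\calV}.\ \hat{z}_p\in\zbox(\hat{X}_p)\}$. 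The maximum is over the finite set $\prod_{p\in\hat{\calV}}\zbox(\hat{X}_p)$, hence attained (and finite whenever the underlying \pip\ is feasible, e.g. the all-zero assignment is valid since $\vb\ge\mathbf{0}$). By definition, $\OPTDP(r,t)=\{\hat{z}\mid J(\hat{z})=V^\ast\}$.

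First I would show $\mu^{\max}_r=V^\ast$. By Lemma~\ref{lemma:min-sumDP}, $\mu_r(\beta)=\max\{J(\hat{z})\mid \hat{z}_{(r)}=\beta\}$ for each $\beta\in[X_r]$; partitioning the feasible tree assignments according to the value at the root path $(r)$ then gives $\mu^{\max}_r=\max_{\beta\in[X_r]}\mu_r(\beta)=\max\{J(\hat{z})\}=V^\ast$. The two inclusions are now immediate. If $\beta\in\delta_{r,t}$, then $\mu_r(\beta)=\mu^{\max}_r=V^\ast$, so by Lemma~\ref{lemma:min-sumDP} there is a valid assignment $\hat{z}$ to $\calT_r^{2t}(G)$ with $\hat{z}_{(r)}=\beta$ and $J(\hat{z})=V^\ast$; thus $\hat{z}\in\OPTDP(r,t)$ and $\beta\in\{z_{(r)}\mid\vz\in\OPTDP(r,t)\}$. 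Conversely, if $\hat{z}\in\OPTDP(r,t)$ and $\beta=\hat{z}_{(r)}$, then $\mu_r(\beta)\ge J(\hat{z})=V^\ast=\mu^{\max}_r$ by Lemma~\ref{lemma:min-sumDP}, while $\mu_r(\beta)\le\mu^{\max}_r$ by definition of $\mu^{\max}_r$; hence $\mu_r(\beta)=\mu^{\max}_r$, i.e. $\beta\in\delta_{r,t}$.

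Since the statement is essentially the composition of Lemma~\ref{lemma:min-sumDP} with the definition of $\delta_{r,t}$ in Line~\ref{line:3b}, I do not expect a genuine obstacle. The only point worth a sentence of care is bookkeeping: all maxima involved are over finite sets, so $\mu^{\max}_r$, $V^\ast$, and the $\argmax$ set $\OPTDP(r,t)$ are well-defined and the chain of (in)equalities in the two inclusions holds even in the degenerate case $V^\ast=-\infty$, where both sides of the asserted equality reduce to $[X_r]$ and the full feasible set, respectively.
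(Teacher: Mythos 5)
Your proof is correct and takes exactly the route the paper intends: the paper offers no written proof beyond the remark that the corollary follows from Line~\ref{line:3b} together with Lemma~\ref{lemma:min-sumDP}, and your argument is precisely the careful unwinding of those definitions (identifying $\mu^{\max}_r$ with the unconstrained tree optimum and checking the two inclusions). The extra care about finiteness and feasibility of the all-zero assignment is harmless but not needed beyond the one sentence you give it.
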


%%%%%%%%%%%%%%%%%%%%%%%%%%%%%%%%%%%%%%%%%%%%%%%%%%%%%%%%%%%%%%%%%%
\subsection{Weak Oscillation of \msp\ - Proof of Theorem~\ref{thm:dp}}\label{subsec:divergence_proof}
%%%%%%%%%%%%%%%%%%%%%%%%%%%%%%%%%%%%%%%%%%%%%%%%%%%%%%%%%%%%%%%%%%

\begin{proof}[Proof of Theorem~\ref{thm:dp}]
We prove Part (1) of the Theorem.
  Fix an assignment $\vz\in \OPTDP(r,t)$ such that
  $z_{(r)}=\delta^{\max}_{r,t}$.  Fix an optimal solution $\vx\in \OPTLP$
  such that $x_r=x^{\max}_r$.  Assume, for the sake of contradiction,
  that $t$ is even and that $z_{(r)} < x^{\max}_r$.

  Without loss of generality, $\vx$ is a basic feasible solution.  By Theorem~\ref{thm:nic} and
  Corollary~\ref{coro:girth}, for some $M$, there exists an $M$-lift
  $\langle \tilde{G},\tilde{\Psi},\tilde{\Phi},\tilde{\calX}\rangle$
  of $\langle G,\Psi,\Phi,\calX\rangle$ such that: (i)~there exists an
  integral valid assignment $\vtx$ for $\tilde{G}$ such that
  $\vx=\avg(\vtx)$, and (ii)~$\girth(\tilde{G})> 4t$.

  The value of $x_r$ equals the average of $\tilde{\vx}$ over the fiber of $r$.
  Let $\tilde{r}$ denote a vertex in the fiber of $r$ such
  that $\tilde{x}_{\tilde{r}}\geq x^{\max}_r$.

  Let $B_{\tilde{G}}(\tilde{r},2t)$ denote the ball of radius $2t$
  centered at $\tilde{r}$. Denote by $\tilde{G}_{B(\tilde{r},2t)}$ the
  subgraph of $\tilde{G}$ induced by $B_{\tilde{G}}(\tilde{r},2t)$.
  Because $\girth(\tilde{G})>4t$, $\tilde{G}_{B(\tilde{r},2t)}$ is a
  tree. It follows that $\tilde{G}_{B(\tilde{r},2t)}$ is isomorphic to
  the computation tree $\calT_r^{2t}(G)$.  Because $\vz$ is an optimal
  valid assignment to $\calT_r^{2t}(G)$, we can regard $\vz$ also as
  an optimal valid assignment to the variable vertices in
  $\tilde{G}_{B(\tilde{r},2t)}$.

  Because $z_{\tilde{r}} < x^{\max}_r$, the restriction of $\vtx$ to the
  variable vertices in $\tilde{G}_{B(\tilde{r},2t)}$ does not equal
  $\vz$. Our goal is to obtain a contradiction by showing that either
  $\vz$ is not an optimal assignment or there exists an optimal
  solution $\mathbf{y}\in \OPTLP$ such that $y_r > x^{\max}_r$. We
  show this by constructing ``hybrid'' integral solutions.

  Let $\calE \triangleq \{\tilde{u} \in B(\tilde{v},2t) \mid
  \frac{1}{2}\cdot d(\tilde{u}, \tilde{r}) \text{ is even}\}$.
  Similarly, let $\calO \triangleq \{\tilde{u} \in B(\tilde{r},2t)
  \mid \frac{1}{2} \cdot d(\tilde{u}, \tilde{r}) \text{ is odd}\}$.  Note
  that both $\calE$ and $\calO$ contain only variable vertices.  We
  refer to $\calE$ as the \emph{even layers} and to $\calO$ as the
  \emph{odd layers}.

Let $\calF$ denote the subgraph of
  $\tilde{G}_{B(\tilde{r},2t)}$ that is induced by:
\begin{inparaenum}[(i)]
\item the vertices $\tilde{u}\in \calE$ such that $z_{\tilde{u}} < \tilde{x}_{\tilde{u}}$,
\item the vertices $\tilde{u} \in \calO$ such that $z_{\tilde{u}} > \tilde{x}_{\tilde{u}}$, and
\item constraint vertices in $\tilde{G}_{B(\tilde{r},2t)}$.
\end{inparaenum}
By definition, $\tilde{r}$ is a vertex in $\calE$, and
by our assumption $z_{\tilde{r}}<\tilde{x}_r$. Hence, $\tilde{r}\in \calF$. Let $\calT$
denote the connected component of $\calF$ that contains $\tilde{r}$.
We root $\calT$ at $\tilde{r}$, and refer to $\calT$ as an
\emph{alternating tree}.

A subtree $\calT_S$ of $\calT$ is a \emph{skinny tree} if each
constraint vertex chooses only one child and each variable vertex
chooses all its children. Formally, a subtree $\calT_S$ of $\calT$ is
a skinny tree if it is a maximal tree with respect to inclusion among all
trees that satisfy (i)~$\tilde{r}\in \calT_S$,
(ii)~$\deg_{\calT_S}(\tilde{C})=2$ for every constraint vertex
$\tilde{C}$ in $\calT_S$, and
(iii)~$\deg_{\calT_S}(\tilde{u})=\deg_{\calT}(\tilde{u})$ for every
variable vertex $\tilde{u}$ in $\calT_S$.  We fix a skinny subtree
$\calT_S$ of $\calT$ to obtain a contradiction.

%%%%%%%%%%%%%%%%
For a subset of variable vertices $\tilde{A}\subseteq \tilde\calV$,
let $\tilde{w}(\tilde{A})\triangleq \sum_{\tilde{u}\in \tilde{A}}
w_u$ (recall that the weight $w_u$ of a
variable vertex $u$ in  $G$ is given to each vertex $\tilde{u}$ in the fiber of $u$) .  We claim that
\begin{align}
  \label{eq:EgeqO}
  \tilde{w}\big(\TScapE\big) &\geq\tilde{w}\big(\TScapO\big),
\end{align}

To prove Equation~(\ref{eq:EgeqO}), define an integral assignment $\vy$ to variable vertices in $\tilde{\calV}$ by
\begin{equation*}
\tilde{y}_{\tilde{u}} \triangleq \begin{cases}
\tilde{x}_{\tilde{u}} -1 &\mathrm{if}~\tilde{u}\in\TScapE\\
\tilde{x}_{\tilde{u}} +1 &\mathrm{if}~\tilde{u}\in\TScapO\\
\tilde{x}_{\tilde{u}}    &\mathrm{otherwise.}\\
\end{cases}
\end{equation*}
Observe that $\vy$ is a valid assignment for $\tilde{G}$.  Indeed, all
the box constraints are satisfied because we increment a value
compared to $\tilde{x}_u$ only if
$\tilde{x}_{\tilde{u}}<z_{\tilde{u}}$.  Similarly, we decrement a
value compared to $\tilde{x}_u$ only if
$\tilde{x}_{\tilde{u}}>z_{\tilde{u}}$.  In addition, we need to show
that every constraint is satisfied by $\vy$. Note that a constraint
$\tilde{C}$ may have at most two neighbors that are variable vertices
in the skinny tree; the rest of the neighbors retain the value
assigned by $\vtx$.  If a constraint $\tilde{C}$ is not a neighbor of
a variable vertex in $\calT_S$, then it is satisfied because $\vtx$
satisfies it.  If a constraint $\tilde{C}$ has two neighbors that are
variable vertices in $\calT_S$, then one is incremented and the other
is decremented.  Overall, the constraint remains satisfied.  Finally,
suppose $\tilde{C}$ has only one neighbor in $\calT_S$. Denote this
neighbor by $\tilde{v}$. Then $\tilde{v}$ is a parent of $\tilde{C}$.
If $\tilde{y}_{\tilde{v}}=\tilde{x}_{\tilde{u}}-1$, then clearly
$\tilde{C}$ is satisfied by $\vy$.  If the value assigned to
$\tilde{v}$ is incremented, then $\tilde{v}\in\calO$ (i.e., an odd
layer).  This implies that the children of $\tilde{C}$ are in an even
layer, the distance of which to the root is at most $2t$. Hence, the children of $\tilde{C}$ belong to the ball
$B(\tilde{r},2t)$. Moreover, these children do not belong to the alternating tree (otherwise, one of its
  children would belong to the skinny tree).  Thus, for each child
  $\tilde{u}$ of $\tilde{C}$ we have $z_{\tilde{u}} \geq
  \tilde{x}_{\tilde{u}}=\tilde{y}_{\tilde{u}}$.  In addition,
  $z_{\tilde{v}} \geq \tilde{y}_{\tilde{v}}$.  Hence, $\vz \geq \vy$
  when restricted to the neighbors of $\tilde{C}$. Because $\vz$
  satisfies $\tilde{C}$, so does $\vy$, as required.
Because $\vy$ is a valid assignment, $\avg(\vy)$ is a feasible
solution of the packing LP.  The optimality of $\vx$ implies that
$\vw^T\cdot \vx \geq \vw^T \cdot
\avg(\vy)$.
By the definition of $\vy$, we have
\[
\vw^T\cdot \big(\vx-\avg(\vy)\big)=\frac{1}{M} \cdot \big(\tilde{w}(\TScapE) -
\tilde{w}( \TScapO)\big),
\] and Equation~(\ref{eq:EgeqO}) follows.

%%%%%%%%%%%%%%%%

\noindent We now define an assignment $\theta$ to variable vertices in $\tilde{G}_{B(\tilde{r},2t)}$ by
\begin{equation*}
\theta_{\tilde{u}} \triangleq \begin{cases}
z_{\tilde{u}} +1 &\mathrm{if}~\tilde{u}\in\TScapE\\
z_{\tilde{u}} -1 &\mathrm{if}~\tilde{u}\in\TScapO\\
z_{\tilde{u}}    &\mathrm{otherwise}\\
\end{cases}
\end{equation*}
We claim that $\theta$ is a valid integral assignment for
$\tilde{G}_{B(\tilde{r},2t)}$.  The proof is analogous to the proof
that $\vy$ is a valid assignment.
By Equation~(\ref{eq:EgeqO}),
%$\theta\in \OPTDP (r,t)$. Indeed,
the value of $\theta$ is not less than the value of $\vz$ since
\begin{equation}
\sum_{\tilde{u}\in \tilde\calV\cap B_{\tilde{G}}(\tilde{r},2t)} \big(\phi_{\tilde{u}} (\theta_{\tilde{u}}) -
\phi_{\tilde{u}} (z_{\tilde{u}})\big)
=\tilde{w}\big(\TScapE\big)-\tilde{w}\big(\TScapO\big) \geq 0.
\end{equation}
Therefore, $\theta\in \OPTDP (r,t)$.
However, $\theta_r > z_r =
\delta^{\max}_{r,t}$, a contradiction.  It follows that
$\delta^{\max}_{r,t} \geq x^{\max}_{r,t}$ if $t$ is even.

The proof of Part (2) that $\delta^{\min}_{r,t} \leq x^{\min}_{r,t}$ for an odd $t$  is analogous 
to the proof that $\delta^{\max}_{r,t} \geq x^{\max}_{r,t}$ if $t$ is even.
It requires the following
modifications.
\begin{inparaenum}[(1)]
\item Fix $\vz\in \OPTDP(r,t)$ such that $z_{(r)}=\delta^{\min}_r$ and $\vx\in\OPTLP$ such that $x_r=x^{\min}_r$.
\item Assume towards a contradiction that $t$ is odd and $z_{(r)}>x_r$.
\item Pick $\tilde{r}$ so that $\tilde{x}_{\tilde{r}} \leq x_r$.
\item The forest $\calF$ is induced by the following set of vertices:
\begin{inparaenum}[(i)]
\item the vertices $\tilde{u}\in \calE$ such that $z_{\tilde{u}} > \tilde{x}_{\tilde{u}}$,
\item the vertices $\tilde{u} \in \calO$ such that $z_{\tilde{u}} < \tilde{x}_{\tilde{u}}$, and
\item constraint vertices in $\tilde{G}_{B(\tilde{r},2t)}$.
\end{inparaenum}
\item Prove that the weight of even layers in the skinny tree is not greater than the weight of the odd layers.
\item The assignment $\vy$ is defined so that it increments even layers and decrements odd layers.
\item The assignment $\theta$ is defined so that it decrements even layers and increments odd layers.
\end{inparaenum}
\end{proof}

%%%%%%%%%%%%%%%%%%%%%%%%%%%%%%%%%%%%%%%%%%%%%%%%%%%%%%%%%%%%%%%%%%%%%%%%
\section*{Acknowledgments}
%%%%%%%%%%%%%%%%%%%%%%%%%%%%%%%%%%%%%%%%%%%%%%%%%%%%%%%%%%%%%%%%%%%%%%%%
The authors would like to thank Nicholas Ruozzi and Kamiel Cornelissen for helpful comments.

%\bibliographystyle{alpha}
%\bibliography{ECCbib}
\bibliographystyle{alpha}
%\bibliography{ECCbib}

\appendix

%%%%%%%%%%%%%%%%%%%%%%%%%%%%%%%%%%%%%%%%%%%%%%%%%%%%%%%%%%%%%%%%%
\section{On Convergence of the Min-Sum Algorithm for Nonbinary Packing and Covering Problems}\label{app:convergence_proof}

In Section~\ref{subsec:divergence} we showed that if the \msp\
algorithm outputs the same result in two consecutive iterations, then this result equals the optimal solution of the LP relaxation (see Corollary~\ref{coro:even odd}).
On the other hand,
even if the LP relaxation has a unique optimal solution and that solution is integral,
then the \msp\ algorithm may not converge (see Sanghavi \emph{et al.}~\cite{SSW09}
for an example with respect to the maximum weight independent set problem).

Convergence of the min-sum algorithm was proved for the maximum weight
$b$-matching and the minimum $r$-edge covering problems by Sanghavi
\emph{et al.} \cite{SMW11} and Bayati \emph{et al.} \cite{BBCZ11}.
They considered the zero-one integer program for maximum weight
matching with the constraints $\sum_{e\ni v} x_e \leq 1$, and
proved that after a pseudo-polynomial number of iterations, the
min-sum algorithm converges to the optimal solution of the LP
relaxation provided that it is unique and integral.  The parameter
that is used to bound the number of iterations is defined as follows.
\begin{definition}[\cite{SMW11}]\label{def:c}
    Given a polyhedron $\calP\subseteq \R^n$ and a cost vector $\vw\in
    \R^n$. Define $c(\calP,\vw)$ by
    \begin{align*}
      c(\calP,\vw) &\eqdf \min_{x\in \calP\setminus\{x^*\}}\frac{\vw^T\cdot(x^*-x)}{\| x^*-x\|_1},
    \end{align*}
  where $x^*=\argmax\{\vw^T \cdot x \mid x\in \calP\}$.
  \end{definition}
By definition, $C(\calP,\vw)\geq 0$, and $c(\calP,\vw)>0$ if and only the LP has a unique optimal solution.
On the other hand, $c(\calP,\vw)\leq w_{\max}$, where $w_{\max}\triangleq\max\{w_i\mid 1\leq i\leq n\}$.

We generalize the convergence result of Sanghavi \emph{et al.}~\cite{SMW11} to nonbinary packing and covering problems.
For the sake of brevity we state the result for packing problems; the analogous result for covering problems holds as well.
\begin{theorem}\label{thm:msp-convergence}
Let $\calP$ denote the polytope $\{\vx\in\rbox(\calX)\mid \mathbf{A}\cdot \vx\leq \vb\}$.
Assume that every column of $\mathbf{A}$ contains at most two $1$s.
Assume that the packing LP $\argmax\{\vw^T\cdot \vx~\vert~\vx\in\calP\}$ has a unique optimal solution $\vx^*$ such that $x^*_i\in\{0,X_i\}$ for every $1\leqslant i\leqslant n$.
Let $\langle G,\Psi,\Phi,\calX\rangle$ denote the factor graph model of the packing LP.
If $t> \frac{w_{\max}}{c(\calP,w)}+\frac{1}{2}$, then the output $\hat{\vx}$ of
Algorithm \msp$(\langle G,\Psi,\Phi\rangle,t)$ satisfies $\hat{\vx}=\vx^*$.
\end{theorem}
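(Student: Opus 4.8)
The plan is to reduce the theorem to the single inequality $\delta^{\max}_{r,t}\le x^*_r$ for even $t$ (and, by the symmetric argument, $\delta^{\min}_{r,t}\ge x^*_r$ for odd $t$). Indeed, uniqueness of $\vx^*$ gives $x^{\min}_r=x^{\max}_r=x^*_r$, and Theorem~\ref{thm:dp} already supplies the opposite bounds $\delta^{\max}_{r,t}\ge x^*_r$ (even $t$) and $\delta^{\min}_{r,t}\le x^*_r$ (odd $t$); so $\delta_{r,t}=\{x^*_r\}$ and hence $\hat x_r=x^*_r$ for every variable vertex $r$. By Corollary~\ref{cor:msp-dp} the inequality to be proved (say, $t$ even) is equivalent to: \emph{no} $\vz\in\OPTDP(r,t)$ satisfies $z_{(r)}>x^*_r$. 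I would assume such a $\vz$ exists and derive a contradiction from $t>\tfrac{w_{\max}}{c(\calP,\vw)}+\tfrac12$. First note that $z_{(r)}\le X_r$ and $z_{(r)}>x^*_r\in\{0,X_r\}$ force $x^*_r=0$ and $z_{(r)}\ge 1$ (for odd $t$ one gets instead $x^*_r=X_r$ and $z_{(r)}\le X_r-1$).

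Since $\vx^*$ is already integral, Theorem~\ref{thm:nic} is not needed: using Corollary~\ref{coro:girth} pick an $M$-lift $\tilde G$ of $G$ with $\girth(\tilde G)>4t$ and lift $\vx^*$ by $\tilde x^*_{\tilde v}\eqdf x^*_{\pi(\tilde v)}$, a valid integral assignment with $\avg(\tilde\vx^*)=\vx^*$. Fix any $\tilde r$ in the fiber of $r$; then $\tilde G_{B(\tilde r,2t)}$ is a tree isomorphic to $\calT_r^{2t}(G)$, and we regard $\vz$ as an optimal valid assignment on $\tilde G_{B(\tilde r,2t)}$. Now run the construction from the proof of Theorem~\ref{thm:dp} with $\tilde\vx^*$ playing the role of the lifted LP solution: take the even and odd layers $\calE,\calO$ about $\tilde r$, the forest $\calF$ induced by $\{\tilde u\in\calE:z_{\tilde u}>\tilde x^*_{\tilde u}\}\cup\{\tilde u\in\calO:z_{\tilde u}<\tilde x^*_{\tilde u}\}$ together with all constraint vertices in the ball, the alternating tree $\calT$ (the connected component of $\tilde r$; note $\tilde r\in\calF$ by assumption), and a skinny subtree $\calT_S$. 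Because every column of $\vA$ has at most two $1$s, every variable vertex of $\tilde G$ has degree at most two, so $\calT_S$ is a union of at most two paths issuing from $\tilde r$; each arm terminates either at a variable vertex at distance $2t$ from $\tilde r$ or strictly inside $B(\tilde r,2t)$. Let $L$ be the set of distance-$2t$ endpoints of $\calT_S$, so $L\subseteq\TScapE$ and $|L|\le2$.

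Next I would form two ``hybrid'' assignments. On the computation tree, let $\theta$ be obtained from $\vz$ by \emph{decrementing} $\TScapE$ and \emph{incrementing} $\TScapO$; the box constraints are respected because $x^*_i\in\{0,X_i\}$ (vertices of $\TScapE$ carry $z\ge1$ and those of $\TScapO$ carry $z\le X-1$), and the linear constraints are respected by exactly the verification used in the proof of Theorem~\ref{thm:dp}, where the unit of slack at a constraint without a child in $\calF$ comes from comparing $\vz$ to $\tilde\vx^*$. Optimality of $\vz$ then gives $\tilde w(\TScapE)\ge\tilde w(\TScapO)$. On the lift, let $\vy$ be obtained from $\tilde\vx^*$ by \emph{incrementing} $(\TScapE)\setminus L$ and \emph{decrementing} $\TScapO$ --- the same moves directed toward $\vz$, but leaving the distance-$2t$ endpoints untouched so that no constraint outside $B(\tilde r,2t)$ is perturbed. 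The only constraint affected by omitting $L$ is the one at distance $2t-1$ above each omitted leaf, whose load merely decreases, so $\vy$ is a valid integral assignment for $\tilde G$; hence $\avg(\vy)\in\calP$, and optimality of $\vx^*$ gives $\tilde w(\TScapE)-\tilde w(L)\le\tilde w(\TScapO)$. Combining,
\[
0\;\le\;\tilde w(\TScapE)-\tilde w(\TScapO)\;\le\;\tilde w(L).
\]

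To conclude, observe that $\avg(\vy)\ne\vx^*$: within any fiber all the modified $\calT_S$-vertices have the same type (an index cannot be at $0$ and at $X_i>0$ simultaneously), so the corrections do not cancel, and in particular $\avg(\vy)_r>x^*_r$. If $\calT_S$ stays inside the ball, then $L=\emptyset$, so $\tilde w(\TScapE)=\tilde w(\TScapO)$, whence $\avg(\vy)\in\OPTLP\setminus\{\vx^*\}$, contradicting uniqueness. Otherwise some arm reaches distance $2t$; then $\vw^T(\vx^*-\avg(\vy))=\tfrac1M\big(\tilde w(L)-\tilde w(\TScapE)+\tilde w(\TScapO)\big)$ lies in $[0,\tfrac1M\tilde w(L)]$, while $\norm{\vx^*-\avg(\vy)}=\tfrac1M|\calT_S\setminus L|$ (counting variable vertices). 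If $k\eqdf|L|\in\{1,2\}$ arms are long, they carry $t+1$ variable vertices each and meet only in $\tilde r$, so $|\calT_S\setminus L|\ge k(t+1)-(k-1)-k=k(t-1)+1$; together with $\tilde w(L)\le k\,w_{\max}$ this gives $\tilde w(L)/|\calT_S\setminus L|\le k\,w_{\max}/\big(k(t-1)+1\big)\le 2w_{\max}/(2t-1)$. Therefore, since $\avg(\vy)\ne\vx^*$, $c(\calP,\vw)\le\vw^T(\vx^*-\avg(\vy))/\norm{\vx^*-\avg(\vy)}\le 2w_{\max}/(2t-1)$, i.e.\ $t\le\tfrac{w_{\max}}{c(\calP,\vw)}+\tfrac12$ --- contradicting the hypothesis. (If $w_{\max}<0$, the displayed chain already reads $0\le\tilde w(\TScapE)-\tilde w(\TScapO)\le k\,w_{\max}<0$, a contradiction, so this degenerate case needs no separate treatment.) The odd-$t$ case is handled identically after exchanging $\calE\leftrightarrow\calO$ and ``increment''$\leftrightarrow$``decrement'' and using $x^*_r=X_r$. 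I expect the boundary sub-case to be the crux: one must keep the LP-side hybrid feasible without touching anything outside the ball (whence the leaf omission), and then carry out the bookkeeping --- in particular the accounting of the (at most two) arms leaving the degree-$\le2$ root --- precisely enough that the loss $\tilde w(L)$ is compensated by a support of size $\Theta(t)$ and the threshold comes out as exactly $\tfrac{w_{\max}}{c(\calP,\vw)}+\tfrac12$.
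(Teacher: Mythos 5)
Your proposal is correct and follows essentially the same route as the paper's proof: reduce to the case $x^*_r=0$ for even $t$, build the alternating skinny tree (a path, by the degree-$2$ assumption) with the distance-$2t$ leaves $L$ omitted from the lift-side hybrid, and combine the optimality of $\vz$ on the computation tree and of $\vx^*$ in the LP with the definition of $c(\calP,\vw)$ to force $t\le \frac{w_{\max}}{c(\calP,\vw)}+\frac{1}{2}$. The only cosmetic differences are that you evaluate $c$ on the base polytope via $\avg(\vy)$ (supplying the no-cancellation observation the paper sidesteps through Observation~\ref{obs:x} on the lifted polytope) and you land the final contradiction on the LP side rather than on the optimality of $\vz$; these are rearrangements of the same inequalities.
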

\medskip
\noindent
We first explain why the two restricting assumptions in Theorem~\ref{thm:msp-convergence} are useful.
\begin{observation}\label{obs:path}.
If each column of $\mathbf{A}$ contains at most two $1$s, then the degrees of the variable vertices in the factor graph are at most $2$.
Hence, the alternating skinny tree (as in the proof of Theorem~\ref{thm:dp}) reduces to a path.
\end{observation}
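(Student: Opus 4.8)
The plan is to unwind the definitions of the factor graph, the lift, and the skinny tree; no new idea is required, and the whole statement reduces to bookkeeping on vertex degrees. For the first assertion, recall from Definition~\ref{def:FG_model} that $E=\{(v_i,C_j)\mid \mathbf{A}_{ji}=1\}$, so a constraint vertex $C_j$ is a neighbor of $v_i$ exactly when $\mathbf{A}_{ji}=1$. Hence $\deg_G(v_i)$ equals the number of $1$s in column $i$ of $\mathbf{A}$, and the hypothesis that each column has at most two $1$s gives $\deg_G(v_i)\leq 2$ for every variable vertex.

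For the second assertion, I would first transfer this degree bound to the lift $\tilde{G}$ used in the proof of Theorem~\ref{thm:dp}. By Definition~\ref{def:covering_map}, the covering map $\pi:\tilde{G}\to G$ restricts to a bijection between the neighbors of $\tilde{v}$ and those of $\pi(\tilde{v})$, so $\deg_{\tilde{G}}(\tilde{v})=\deg_G(\pi(\tilde{v}))\leq 2$ for every variable vertex $\tilde{v}$; since the alternating tree $\calT$ is an induced subgraph of $\tilde{G}_{B(\tilde{r},2t)}$, we get $\deg_{\calT}(\tilde{u})\leq 2$ for every variable vertex $\tilde{u}$ of $\calT$. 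Now I would bound the degree of every vertex of the skinny tree $\calT_S$, which is rooted at the variable vertex $\tilde{r}$ and inherits the bipartition of $G$ between variable and constraint vertices. By property (ii) of the skinny-tree definition every constraint vertex of $\calT_S$ has degree $2$ (one variable-vertex parent and one variable-vertex child), and by property (iii) every variable vertex of $\calT_S$ keeps its full $\calT$-degree, which is at most $2$. Hence every vertex of $\calT_S$ has degree at most $2$.

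I would then invoke the elementary fact that a finite tree in which every vertex has degree at most $2$ is a simple path, which gives the claim. The only point requiring care---and the ``main obstacle,'' such as it is---is the root $\tilde{r}$: having no parent, it may keep up to two children and therefore sit in the interior of the path rather than at an endpoint, whereas every other variable vertex spends one of its at most two incident edges on a parent and so contributes at most one child. This collapse of the alternating skinny tree to a path is exactly what turns the weight comparison in the proof of Theorem~\ref{thm:dp} into a one-dimensional alternating sum along a path, which is what the convergence argument will exploit.
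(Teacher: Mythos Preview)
Your argument is correct and is exactly the elaboration the paper has in mind; the paper states this observation without proof, and your unwinding of the definitions---column count gives $\deg_G(v_i)\le 2$, the covering map preserves degrees, and properties (ii)--(iii) of the skinny tree force every vertex of $\calT_S$ to have degree at most $2$---is the intended justification. Your remark that $\tilde{r}$ may sit in the interior of the path (having up to two children) matches how the paper later uses the observation, where the skinny tree is ``a path that contains the root $\tilde{r}$'' and the set $L$ of distance-$2t$ leaves satisfies $|L|\le 2$.
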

For every $M$-lift $\tilde{G}$ of a factor graph $G$, we define
the polytope $\tilde{\calP} = \{\tilde{\vx}\in\rbox(\tilde{\calX})\mid
\tilde{\mathbf{A}}\cdot \tilde{\vx}\leq \tilde{\vb}\}$, where
$\tilde{\mathbf{A}}$ and $\tilde{\vb}$ are the constraint matrix and
vector of the lifted factor graph.
\begin{observation}\label{obs:x}
If the unique solution $\vx^*$ of the packing LP satisfies $x^*_i\in\{0,X_i\}$ for every $1\leqslant i\leqslant n$,
then $C(\tilde\calP,\tilde\vw)= C(\calP,\vw)$ for every $M$-lift $\tilde{G}$ of the factor graph $G$.
\end{observation}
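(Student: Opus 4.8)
The plan is to reduce Observation~\ref{obs:x} to two facts about the lifted LP $\max\{\tilde\vw^T\vtx\mid\vtx\in\tilde\calP\}$, both of which come from one elementary statement about averaging a lifted assignment at a ``corner'' coordinate. Write $\vtx^*$ for the lift of $\vx^*$, i.e.\ $\tilde x^*_{\tilde v}\eqdf x^*_{\pi(\tilde v)}$. \textbf{Fact (a):} $\vtx^*$ is the unique optimal solution of the lifted LP. \textbf{Fact (b):} for every $\vtx\in\tilde\calP$,
\[
\tilde\vw^T(\vtx^*-\vtx)=M\cdot\vw^T\big(\vx^*-\avg(\vtx)\big)\qquad\text{and}\qquad\norm{\vtx^*-\vtx}=M\cdot\norm{\vx^*-\avg(\vtx)}.
\]
Granting these, the observation follows quickly. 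For $c(\tilde\calP,\tilde\vw)\le c(\calP,\vw)$: the lift $\vtx$ of any $\vx\in\calP\setminus\{\vx^*\}$ lies in $\tilde\calP\setminus\{\vtx^*\}$ and has $\avg(\vtx)=\vx$, so by Fact~(b) its defining ratio equals that of $\vx$; taking the infimum over $\vx$ gives the bound. For $c(\tilde\calP,\tilde\vw)\ge c(\calP,\vw)$: any $\vtx\in\tilde\calP\setminus\{\vtx^*\}$ has $\avg(\vtx)\ne\vx^*$ by Fact~(a), and Fact~(b) makes its ratio equal to that of $\avg(\vtx)\in\calP\setminus\{\vx^*\}$, which is at least $c(\calP,\vw)$.

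Next I would set up the averaging argument --- the same one used just before Theorem~\ref{thm:nic} --- which yields the first identity in Fact~(b) and the optimality of $\vtx^*$. A constraint vertex $\tilde C$ in the fiber of $C$ is adjacent to exactly one copy of each neighbor of $C$ (covering-map bijection), so summing the $M$ lifted copies of the constraint at $C$ and dividing by $M$ recovers the base constraint evaluated at $\avg(\vtx)$; together with the box constraints this gives $\avg(\vtx)\in\calP$ for every $\vtx\in\tilde\calP$, and by linearity $\tilde\vw^T\vtx=M\cdot\vw^T\avg(\vtx)$. Applying this to $\vtx^*$, which is feasible because it is constant on each fiber and $\vx^*$ is feasible, shows the lifted LP has optimal value $M\cdot\vw^T\vx^*$, attained at $\vtx^*$.

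The one substantive point is the $\ell_1$ identity in Fact~(b), and this is exactly where the hypothesis $x^*_v\in\{0,X_v\}$ enters. Fix $\vtx\in\tilde\calP$ and a variable $v$: if $x^*_v=0$ then $\tilde x_{\tilde v}-x^*_v=\tilde x_{\tilde v}\ge0$ for every $\tilde v\in\pi^{-1}(v)$, and if $x^*_v=X_v$ then $\tilde x_{\tilde v}-x^*_v=\tilde x_{\tilde v}-X_v\le0$ for every such $\tilde v$. In either case the summands share a sign, so the triangle inequality is tight:
\[
\sum_{\tilde v\in\pi^{-1}(v)}\abs{\tilde x_{\tilde v}-x^*_v}=\left|\sum_{\tilde v\in\pi^{-1}(v)}(\tilde x_{\tilde v}-x^*_v)\right|=M\cdot\abs{\avg(\vtx)_v-x^*_v},
\]
and summing over $v$ gives the identity. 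In particular $\avg(\vtx)=\vx^*$ forces $\tilde x_{\tilde v}=x^*_v$ on every fiber, i.e.\ $\vtx=\vtx^*$; since any optimal $\vtx$ of $\tilde\calP$ has $\avg(\vtx)$ optimal for $\calP$, hence equal to $\vx^*$ by uniqueness, this is precisely Fact~(a).

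I expect this tight-triangle-inequality step to be the whole difficulty. Dropping the corner hypothesis breaks the argument in exactly the places one would worry about: a lift may average back to $\vx^*$ while differing from $\vtx^*$, so the lifted optimum need not be unique and $c(\tilde\calP,\tilde\vw)$ can collapse to $0$; and in general one recovers only $\norm{\vtx^*-\vtx}\ge M\cdot\norm{\vx^*-\avg(\vtx)}$, which still gives $c(\tilde\calP,\tilde\vw)\le c(\calP,\vw)$ but not the matching lower bound. The averaging identities and the bookkeeping with infima are routine.
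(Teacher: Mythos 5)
The paper states Observation~\ref{obs:x} without proof, so there is nothing to compare against line by line; judged on its own, your argument is correct and complete. The two averaging identities (objective value and feasibility of $\avg(\vtx)$, hence optimality of the constant lift $\vtx^*$) are exactly the machinery the paper already uses around Theorem~\ref{thm:nic}, and you correctly isolate the one place the corner hypothesis $x^*_v\in\{0,X_v\}$ is indispensable: on each fiber the differences $\tilde x_{\tilde v}-x^*_v$ all have the same sign, so the triangle inequality is tight, giving both $\norm{\vtx^*-\vtx}=M\cdot\norm{\vx^*-\avg(\vtx)}$ and the implication $\avg(\vtx)=\vx^*\Rightarrow\vtx=\vtx^*$ (which in turn makes $c(\tilde\calP,\tilde\vw)$ well defined, since Definition~\ref{def:c} presupposes a unique maximizer). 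With those two facts the two inequalities between the infima follow as you say, the upper bound via constant lifts and the lower bound via averaging. Your closing remark about what survives without the corner hypothesis (only $c(\tilde\calP,\tilde\vw)\leq c(\calP,\vw)$, and possible collapse to $0$) is also accurate and is presumably why the authors impose that hypothesis in Theorem~\ref{thm:msp-convergence}.
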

\begin{proofof}{Theorem~\ref{thm:msp-convergence}}
  We focus on the case that $t$ is even; the proof for odd values of
  $t$ is analogous.  The proof uses many of the notions used in the
  proof of Theorem~\ref{thm:dp} with modifications based on
  Observations~\ref{obs:path} and~\ref{obs:x}, hence we reuse the
  same notations.  Note that if $t$ is even and $x^*_r=X_r$, then
  Theorem~\ref{thm:dp} implies that $\delta^{\max}_{r,t}=X_r$, and
  hence $\hat{x}_r=x^*_r$, as required.  Thus, we are left with the
  case that $x^*_r=0$ and wish to prove that $\delta^{\max}_{r,t}=0$
  if $t$ is even.

  Assume towards a contradiction that $\delta^{\max}_{r,t} >0$, and
  let $\vz\in\OPTDP(r,t)$ denote an assignment such that $z_{(r)}=\delta^{\max}_{r,t} >0$.  As
  in the proof of Theorem~\ref{thm:dp}, we define an alternating tree
  in $\tilde{G}_{B(\tilde{r},2t)}$. However, the variable vertices in
  the even layers of the alternating tree satisfy $z_{\tilde{v}} >
  \tilde{x}^*_{\tilde{v}}$.  Variables vertices in the odd layers
  of the alternating tree satisfy $z_{\tilde{v}} <
  \tilde{x}^*_{\tilde{v}}$.  By Obs.~\ref{obs:path}, the skinny tree
  is simply a path that contains the root $\tilde{r}$.

Let $L$ denote the set of leaves of the skinny tree whose distance from $\tilde{r}$ is $2t$.
Because the skinny tree is a path, it follows that $|L|\leq 2$.

Define an integral assignment $\vy$ to variable vertices in $\tilde{\calV}$ by
\begin{equation*}
\tilde{y}_{\tilde{u}} \triangleq \begin{cases}
\tilde{x}^*_{\tilde{u}} +1 &\mathrm{if}~\tilde{u}\in(\TScapE)\setminus L\\
\tilde{x}^*_{\tilde{u}} -1 &\mathrm{if}~\tilde{u}\in\TScapO\\
\tilde{x}^*_{\tilde{u}}    &\mathrm{otherwise.}\\
\end{cases}
\end{equation*}
As in the proof of Theorem~\ref{thm:dp}, the assignment $\vy$ is a
valid assignment for $\tilde{G}$.

Assume that $L=\emptyset$.
Since $\tilde{x}^*$ is a unique optimal assignment, it follows that
\begin{align}
  \label{eq:OgeqEL}
\tilde{w}\big(\TScapO\big) &> \tilde{w}\big((\TScapE)\setminus L\big).
\end{align}
Define
\begin{equation}\label{eq:theta2}
\theta_{\tilde{u}} \triangleq \begin{cases}
z_{\tilde{u}} - 1 &\mathrm{if}~\tilde{u}\in\TScapE\\
z_{\tilde{u}} +1 &\mathrm{if}~\tilde{u}\in\TScapO\\
z_{\tilde{u}}    &\mathrm{otherwise.}\\
\end{cases}
\end{equation}
As in the proof of Theorem~\ref{thm:dp}, the assignment $\theta$ is a valid integral assignment for
$\tilde{G}_{B(\tilde{r},2t)}$. But, Equation~(\ref{eq:OgeqEL}) implies that the assignment $\theta$ has a higher value than $\vz$. This
contradicts the optimality of $\vz$.

Assume that $|L|=2$ (the proof for $|L|=1$ is similar).
By the definition of $C(\tilde{\calP},\tilde{\vw})$ and by Obs.~\ref{obs:x},
\begin{align}
  \label{eq:cpw}
  \frac{\tilde{\vw}^T\cdot(\tilde{\vx}^*-\vy)}{\| \tilde{\vx}^* -  \vy \|_1} \geq C(\calP,\vw).
\end{align}
Note that
\begin{align*}
  \tilde{\vw}^T\cdot(\tilde{\vx}^*-\vy) &=  \tilde{w}\big(\TScapO\big)-\tilde{w}\big((\TScapE)\setminus L\big),~\mathrm{and}\\
\| \tilde{\vx}^* -  \vy \|_1 &= |\calT_S| - |L| =2t- 1.
\end{align*}
It follows that
\begin{align}
  \tilde{w}\big(\TScapO\big)-\tilde{w}\big((\TScapE)\setminus L\big) &\geq (2t-1)\cdot
  C(\calP,\vw).
\label{eq:OgeqEL2}
\end{align}
Consider the assignment $\theta$ defined in Equation~(\ref{eq:theta2}).
By  Equation~(\ref{eq:OgeqEL2}) we have
\begin{align*}
  \sum_{\tilde{u}\in \tilde\calV\cap B_{\tilde{G}}(\tilde{r},2t)}
  (\phi_{\tilde{u}} (\theta_{\tilde{u}}) - \phi_{\tilde{u}}
  (z_{\tilde{u}})) &=\tilde{w}\big(\TScapO\big)-\tilde{w}\big(\TScapE\big)\\
  &=\tilde{w}\big(\TScapO\big)-\tilde{w}\big((\TScapE)\setminus L\big) - \tilde{w}\big(L\big)\\
  &\geq (2t-1)\cdot C(\calP,\vw) - 2\cdot w_{\max}.
\end{align*}
Because $t> \frac{w_{\max}}{c(\calP,w)} +\frac{1}{2}$, it follows that
the assignment $\theta$ has a higher value than $\vz$. This
contradicts the optimality of $\vz$. It follows that $\delta^{\max}_{r,t}=0$ if $t$ is even.
\end{proofof}
\end{document}